\documentclass{lmcs} 

\keywords{probabilistic logic, decision theory, sensor networks, uncertainty, sequential logic}

\usepackage{amsfonts}
\usepackage{amsmath}
\usepackage{graphicx}
\usepackage{float} 

\usepackage{url}
\usepackage{amssymb}

\usepackage{subcaption}

\usepackage{hyperref}

\begin{document}\bibliographystyle{alphaurl}		
\title[Probabilistic Disjunctive Normal Forms]{Probabilistic Disjunctive Normal Forms in Temporal Logic and Automata Theory}

\author[A.~Kuznetsov]{Alexander Kuznetsov\lmcsorcid{0000-0002-0365-2077}}

\address{V.A. Trapeznikov Institute of Control Sciences of Russian Academy of Sciences,	Profsoyuznaya st., 65, Moscow, 117997,	Russia}	
\email{avkuz@bk.ru}  

		
		\begin{abstract}					
			\noindent This article introduces probabilistic disjunctive normal forms (PDNFs) as a framework for representing and reasoning about uncertainty in logical systems. Unlike classical DNFs, PDNFs assign real-valued weights to variables, encoding probabilistic information about their presence, absence, or negation. Then we construct a vector space of PDNFs that allows algebraic evidence combination. PDNFs are interpreted as probability distributions over venjunctions (temporal logic constructs) and as integrable functions over partitioned intervals, where the integrals determine variable probabilities. This dual perspective allows for a Banach space structure and the application of functional analysis. We demonstrate that, under exponential parametrisation, PDNF addition aligns with Bayesian evidence fusion and derive bounds for outcome identification from random samples. The formalism thus bridges logic, numerical methods, and continuous probability.
		\end{abstract}

\maketitle

\section{Introduction}

Consider a system with $m$ sensors, observed over a sequence of time steps. Each sensor may or may not be triggered at a given moment, and several parameters can influence its response -- both those initially intended by the user (such as motion, mass, illumination, humidity) and random factors. The presence or absence of specific random factors is unknown to the user, but they know exactly how and which factors could potentially influence the device's operation. It is necessary to develop a mathematical formalism that would describe the operation of such a sensor system in the absence of knowledge about the effects of random factors, but with knowledge of how the sensor will behave in response to each factor. Essentially, we should implement a mechanism for complex sensor estimation similar to \cite{SERGEEV2021134}.

The formalism developed in this article can be connected with the broader family of models for reasoning under uncertainty, for example, with Probabilistic Boolean Networks, Markov Logic Networks, and Dempster-Shafer Theory. 

Probabilistic Boolean Networks \cite{shmulevich2002probabilistic,Drossel2008} model systems by associating a probability distribution over possible Boolean update functions for each variable. In contrast, our model associates a single, real-valued parameter with each variable to govern its stochastic presence within a logical conjunction.

Markov Logic Networks \cite{richardson2006markov} assign weights to first-order logic formulae, where the weight signifies the strength of a constraint; a world's probability is proportional to the exponential sum of weights of formulae it satisfies. Our method, however, assigns weights not to entire formulae but to the constituent variables within a disjunctive normal form.

Dempster-Shafer Theory \cite{shafer1976mathematical} quantifies belief and plausibility through separate mass functions, offering a powerful but computationally intensive framework for evidence combination. Our formalism provides a simpler, algebraically grounded alternative by directly encoding uncertainty into the variables of a logical expression via a family of parametric functions.

On the other hand, we propose a design that is somewhat reminiscent of fuzzy logic \cite{NILSSON198671}, in which, however, instead of the degree of truth of a variable in a formula, the probability of the presence of a variable in a formula is considered, and this probability can be calculated based on the design features of the sensor. 

In our model, the first level of uncertainty is given by the mapping $\mathbf{F}$, which for a fixed set of weights $\Xi$ determines the probability of occurrence of a particular literal (an analogue of the membership degree in type-1 fuzzy sets). The second level arises because the weights $\Xi$ themselves may be random, i.e., the probability distribution over the set of possible temporal logical structures (venjunctions) is itself random. This fully corresponds to the concept of type-2 fuzzy sets, where uncertainty is described by a ``secondary membership function'' \cite{KarnikMendel2001}. Thus, our formalism can be interpreted as a probabilistic realisation of the ideas of type-2 fuzzy logic, where the Bochner integral of a random encoder plays a role analogous to the centroid for a type-2 fuzzy set, averaging out the second-level uncertainty.

Also, a somewhat similar approach is presented in the article \cite{sitnikov2002method}, which focuses on the logical correlation in knowledge bases and suggests a method for knowledge representation and the discovery of dependencies between selected data features. A knowledge base containing links between information features is represented in the form of logical equations with finite predicates.

From another perspective, the proposed construction can be understood as defining, using a probabilistic formula, a family of conventional formulae that are similar in some sense. We select the formula that best describes the object of interest to us. This corresponds, for example, to developing a proof for a theorem similar to one already known. In this sense, this work continues the article \cite{Schumann04052018}.

Finally, the proposed formalism, based on asynchronous sequential logic \cite{vasyukevich2011digital} and used to express memory effects, relies on moments of a system’s state change. The theoretical framework of sequential logic consists of mathematical tools such as sequention and venjunction, as well as the logical-algebraic equations built upon them.

The recent work on parametric probabilistic automata (pPA) introduces parameters into transition probabilities, enabling compositional reasoning about system properties \cite{mertens2025parametric}. In parallel, research on knowledge compilation, such as the work on d-DNNF circuits, seeks efficient representations for probabilistic inference by exploiting the structure of Boolean circuits \cite{derkinderen2025compiling}. Meanwhile, extensions of temporal logic, such as PTATL, enable the verification of strategic properties in multi-agent systems under uncertainty \cite{kwiatkowska2025probabilistic}.

While these approaches focus on modelling uncertainty in system dynamics (pPA), enabling tractable inference (d-DNNF), or verifying temporal properties (PTATL), our work takes a slightly different direction. 

All tasks achievable with the proposed formalism can also be formulated using probabilistic regular expressions or Bayesian networks, but the primary motivation for the formalism is to provide a framework that closely resembles ordinary Boolean logic while also carrying the structure of a linear space. In our framework, formulae look like familiar logical expressions. At the same time, we can perform operations such as addition, scalar multiplication, and use norm-based similarity measures. In this paper, we discuss probabilistic disjunctive normal forms (the formalism, previously constructed by the author in \cite{Kuznetsov2025PDNF}) and investigate their fundamental properties. We obtain quantitative estimates of the number of observations needed to completely determine the underlying deterministic structure from random realisations of a probabilistic disjunctive normal form. Furthermore, we prove several theorems on the identifiability of such a deterministic description, showing how repeated observations gradually eliminate uncertainty and eventually allow us to replace the probabilistic model by an explicit list of possible behaviours. 
  
While these results have counterparts in other probabilistic frameworks, the main contribution of this paper is the systematic connection between discrete logical structures and continuous functional analysis, which opens new possibilities for applying analytic methods to logical problems. Basically, the main idea of the present article is to encode a probabilistic logical expression via an integrable function.

\section{Probabilistic Disjunctive Normal Forms and their Properties}

In the article \cite{Kuznetsov2025PDNF}, the author introduced probabilistic disjunctive normal form (PDNF) as a tool to describe a system with uncertainty. Let us recall the definitions from the aforementioned article, with some minor additions, and then (Subsection~\ref{ss:cont}) move from a discrete to a continuous model of logical functions.
 
\begin{defi}\label{def:def01}
	We call the probabilistic elementary conjunction of $m$ variables $x_1,\dots , x_m$ the form $x_1^{\xi_1}\wedge \dots \wedge x_m^{\xi_m}$, where $\xi_i\in \mathbb{R}$, $i=\overline{1,m}$. Assume that $F_1:\mathbb{R}\rightarrow [0,1]^m$ is a component-wise increasing mapping, $F_{-1}:\mathbb{R}\rightarrow [0,1]^m$ is a component-wise decreasing mapping, $F_0:\mathbb{R}\rightarrow [0,1]^m$, $F_i = (F_i^1,\dots, F_i^m)$, $F^j_1(0) = F^j_{-1}(0) = 0$ and $F^j_{-1}(x)+F^j_{0}(x)+F^j_{1}(x) = 1$ for all $x\in\mathbb{R}$, $F_i^j$ are continuous except perhaps at a finite number of points, $i=\overline{-1,1}$, $j=\overline{1,m}$. We define that 
	\begin{gather}
	\mathbb{P}\{x_i^{\xi_i}\sim x_i\} =F_{1}^i(\xi_i),\label{eqPF1}\\
	\mathbb{P}\{x_i^{\xi_i}\sim \varepsilon\}= F_0^i(\xi_i),\\
	\mathbb{P}\{x_i^{\xi_i}\sim  \overline{x}_i \} = F_{-1}^i(\xi_i),\label{eqPF3}
	\end{gather}
	Here ``$x_i^{\xi_i}\sim x_i$'' means that the variable $x_i$ is present in the realisation of the probabilistic elementary conjunction in the place of $x_i^{\xi_i}$, and $x_i^{\xi_i}\sim \varepsilon$ means that the  variable $x_i$ is not present in the realisation of the probabilistic elementary conjunction in the place of $x_i^{\xi_i}$. If $\xi_i=0$ then $x^{\xi_i}$ surely does not appear in the elementary conjunction.
\end{defi}

\begin{defi}
	We call the PDNF of $m$ variables with the weight mapping $\mathbf{F}:\mathbb{R}\rightarrow ([0,1]^{m})^3$ the form $Z=X_1\vee\dots \vee X_n$, where $X_i$, $i=\overline{1,n}$ are probabilistic elementary conjunctions of $m$ variables, and $\mathbf{F}=(F_{-1}, F_{0}, F_1)$ consists of the functions from the previous definition. Let us denote the set of such PDNFs as $\mathcal{B}^m_{\mathbf{F}}(n)$. 
\end{defi}
We can think that all PDNF have equal lengths as PDNF can contain any number of empty probabilistic elementary conjunctions $x_1^0\wedge \dots \wedge x_m^0$. The PDNF is a syntactic construction; the disjunction sign $\vee$ is merely a separator between ordered conjunctions. Possible semantic interpretations will be described in Sections~\ref{sec:vasuk} and \ref{sec:Markov}.

\begin{defi}
	We call the product of the probabilistic elementary conjunction $X=x_1^{\xi_1}\wedge \dots \wedge x_m^{\xi_m}$ and the number $\alpha\in \mathbb{R}$ the probabilistic elementary conjunction
	\[\alpha \cdot X = x_1^{\alpha \xi_1}\wedge \dots \wedge x_m^{\alpha \xi_m}.\]
	
	We call the product of the probabilistic DNF $Z = \bigvee_{i=1}^n X_i$ and the number $\alpha\in \mathbb{R}$ the form
	\begin{equation}\label{eq:mult}
	\alpha Z=\alpha \bigvee_{i=1}^n X_i = \bigvee_{i=1}^n (\alpha X_i).
	\end{equation}
\end{defi}

Clearly, the set $\mathcal{B}^m_\mathbf{F}(n)$ with the aforementioned operations forms a linear vector space. The set of all probabilistic elementary conjunctions of $m$ variables can be represented as $\mathbb{R}^m$, the set of all probabilistic DNF of $m$ variables and $n$ conjunctions as the set of matrices $\mathbb{R}^{n\times m}$. 

\begin{defi}
	We call the sum of probabilistic elementary conjunctions $X=x_1^{\xi_1}\wedge \dots \wedge x_m^{\xi_m}$ and $Y=x_1^{\eta_1}\wedge \dots \wedge x_m^{\eta_m}$ the probabilistic elementary conjunction
	\[X+ Y = x_1^{\xi_1+\eta_1}\wedge \dots \wedge x_m^{\xi_m+\eta_m}.\]
	We call the sum of probabilistic DNF $Z_1 = \bigvee_{i=1}^n X_i$, $Z_2 = \bigvee_{i=1}^n Y_i$ the following form
	\begin{equation}
	Z_1+ Z_2=\bigvee_{i=1}^n X_i+ \bigvee_{i=1}^n Y_i = \bigvee_{i=1}^n (X_i+ Y_i).\label{eqsum}
	\end{equation}
\end{defi}

The space $\mathcal{B}^m_\mathbf{F}(n)$ can be normed using the norm induced from $\mathbb{R}^m$ as follows:
\begin{equation}\label{eq:norm1}
\|X_i\|_1=\| x_1^{\xi_{i1}}\wedge \dots \wedge x_m^{\xi_{im}}\|_1=\sum_{j=1}^m |\xi_{ij}|,\quad
\biggl \|\bigvee_{i=1}^n X_i\biggr \|_1= \sum_{i=1}^n \|X_i\|_1.
\end{equation}

Obviously,
\[\|Z_1+ Z_2\|_1=\sum_{i=1}^n\sum_{j=1}^m |\xi_{ij}+\eta_{ij}|\leq \|Z_1\|_1+\|Z_2\|_1\]
is a metric.

The idea behind introducing an algebraic structure for PDNFs is to model the aggregation of evidence from multiple observational series (see Appendix \ref{app:algebra}). Additional explanation of this kind of summation will be given in Section \ref{sec:automata}. Additional ideas about the internal structure of PDNF are contained in Appendix \ref{app:struct}.

\subsection{Transition from discrete structure to continuos functions\label{ss:cont}}

Let us denote $\Delta_{ij}=[j-1+(i-1)m,j+(i-1)m]$ and define the finite Riemann integrable function $X(t,\xi;\cdot):\Delta_{tj}\rightarrow \mathbb{R}$ with the support $\Delta_{tj}$, $t\geq 1$, $\xi = (\xi_1\dots, \xi_m)\in \mathbb{R}^m$, so that
\begin{equation}\label{eq:cont01}
\mathbb{P}\{x_j^{\xi_j}\sim x_j\} = \begin{cases}
	\int_{\Delta_{tj}}X(t,\xi_j;x)dx,\quad &1\geq \int_{\Delta_{tj}}X(t,\xi_j;x)dx >0,\\
	1, \quad &\int_{\Delta_{tj}}X(t,\xi_j;x)dx >1,\end{cases}
\end{equation}
\begin{equation}
	\mathbb{P}\{x_j^{\xi_j}\sim \overline{x}_j\} = \begin{cases}
	\int_{\Delta_{tj}}X(t,\xi_j;x)dx,\quad &-1\leq \int_{\Delta_{tj}}X(t,\xi_j;x)dx <0,\\
1, \quad &\int_{\Delta_{tj}}X(t,\xi_j;x)dx < -1,
\end{cases}
\end{equation}
\begin{equation}\label{eq:cont03}
\mathbb{P}\{x_j^{\xi_j}\sim \varepsilon\} = \begin{cases}
	1 - \mathbb{P}\{x_j^{\xi_j}\sim x_j\},\quad &1\geq \int_{\Delta_{tj}}X(t,\xi_j;x)dx >0,\\
	1 - \mathbb{P}\{x_j^{\xi_j}\sim \overline{x}_j\},\quad &-1\leq \int_{\Delta_{tj}}X(t,\xi_j;x)dx <0,\\
	0,\quad &\text{otherwise}.
\end{cases}
\end{equation}
Here $\xi_j$ in $X(t,\xi_j;\cdot)$ is a parameter, defining the shape of the function $X(t,\xi_j;\cdot)$. If, for example, the function $X(t,\xi_j;\cdot) \equiv\alpha_j$ is a constant, $\xi_j = \alpha_j$.

Let us define an encoder of a probabilistic elementary conjunction as
\[Y(t,\xi;x) = \sum_{j=1}^m X(t,\xi_j;x).\]

Let $\xi:[0,n]\to \mathbb{R}^m$ be a function with certain properties (e.g., a continuous one, a random walk, etc.). A PDNF corresponds to the sum
\[Z(\xi; x) = \sum_{t=1}^n Y(t,\xi(t); x).\]
If, for example, functions $X(t,\xi_{j};\cdot)$ are constants, then the corresponding encoder 
\[Z(\xi; x)=\sum_{t=1}^n\sum_{j=1}^m\xi_{tj}\mathbf{1}_{\Delta_{tj}},\quad \xi_j(t) = \xi_{tj},\]
is a piecewise linear function and $\xi_j$ is the height (negative or positive one) of the step $j$.

It is clear that a classical DNF in this notation can be easily encoded as the piecewise linear function, in which possible values of $\xi_j$ belong to the set $\{-1,0,1\}$. The function $Z(\xi; \cdot)$ can be seen as an analogue signal encoding the corresponding logical statement with a certain, defined by $\xi$, degree of reliability. So, the sum and the multiplication of PDNFs defined by (\ref{eqsum}), (\ref{eq:mult}) can be understood in terms of signal addition and amplification.

Therefore, any piecewise continuous function $Z(\xi;\cdot): [0,nm]\rightarrow \mathbb{R}$ with a finite number of jumps encodes a PDNF from $\mathcal{B}^m_{\mathbf{F}}(n)$. We denote the set of this encoders as $\mathcal{E}^m_{\mathbf{F}}(n)$. For $\mathcal{E}^m_{\mathbf{F}}(n)$, the norm (\ref{eq:norm1}) corresponds to the $L_1$-norm
\[\|Z(\xi; x)\|_1 = \int_0^{nm}|Z(\xi; x)|dx.\]
We will further suppose that the set $\mathcal{E}^m_{\mathbf{F}}(n)$ is bounded in the sense of $\|\cdot\|_1$ or even $\|\cdot\|_\infty$ supremum norm when it will be necessary. 
\begin{rem}
The aforementioned way of encoding was chosen only for simplicity. If we want to use continuous time, we should consider encoders $Z(\xi;\cdot,\cdot):[0,n]\times [0,m]\rightarrow \mathbb{R}$, where the set $[t-1,t]\times[j-1,j]$ corresponds to the $j-$th variable in the $t-$th probabilistic elementary conjunction. In this case, we can model PDNFs as a continuum of elementary conjunctions, interpreted as a continuous evolution of the agent’s state of knowledge.

The probabilities (\ref{eq:cont01})--(\ref{eq:cont03}) for this type of encoding can be expressed via the signed integral
\[\iint_{[t-1,t]\times[i-1,i]}Z(\xi;\tau,x)d\tau dx,\quad t=\overline{1,n}, j=\overline{1,m}.\]
We will denote the set of encoders of this type as $\mathcal{E}_{\mathbf{F}}(n\times m)$.
\end{rem}

\begin{rem}
	We can also use encoders to encode the probabilities $x_{ij} = \mathrm{Truth}$, $x_{ij} = \mathrm{False}$ or that $x_{ij}$ is undefined. To handle more than ternary-valued logic, we should use, for example, $Z(\xi;\cdot,\cdot):[0,n]\times [0,m]\rightarrow \mathbb{C}$ and separately interpret real and imaginary parts of the integral as probabilities of logical values. 
\end{rem}	

\begin{rem}
	We can consider an encoder $Z(\xi;\cdot,\cdot)\in\mathcal{E}_{\mathbf{F}}(n\times m)$ as a special kind of a probabilistic cellular automaton with cells $\Delta_{tj}$, $j=\overline{1,m}$ in time $t$ and possible cell states $\{-1,0,1\}$. The function $\xi:[0,n]\rightarrow \mathbb{R}^m$ determines the probabilities of transitions between states.
\end{rem}

\section{Structural similarity between PDNFs, classical DNFs, venjunction, and asynchronous sequential logic\label{sec:vasuk}}

Recall the concepts of venjunction \cite[p. 6]{vasyukevich2011digital} and sequencion \cite[P. 45]{vasyukevich2011digital}, the logic-dynamical operations of asynchronous sequential logic. Venjunction is an asymmetric logical-dynamic operation, denoted by $\angle$, such that $x \angle y$ evaluates to 1 if and only if $x \wedge y = 1$ and, at the moment when $x$ becomes 1, $y$ is already 1. Sequention is defined as a binary function with arguments, which are represented in the manner of sequences of binary elements-variables like the following dependence:
\[\langle x \rangle= G(\langle x_1, x_2,\dots, x_n\rangle).\] 
All possible states of the function are $\langle x \rangle = 1$ and $\langle x \rangle = 0$. Sequention takes the value that depends on the values of variables, as well as on the order in which these values appear.

In this sense, our PDNFs can be endowed with semantics similar to that of sequences of elementary conjunctions. More precisely, a PDNF $Z = X_1\vee \dots \vee X_n$, where $X_i$, $i=\overline{1,n}$ are probabilistic elementary conjunctions of $m$ variables $x_1^{\xi_{i1}},\dots, x_m^{\xi_{im}}$, can be seen as an analogue of venjunction $Z = X_n\angle \dots \angle X_1$. Note that the order of conjunctions in the venjunctive chain is reversed, reflecting temporal precedence, as $X_1$ is the first observation and $X_n$ is the last one. However, we have weighted literals $x_j^{\xi_{ij}}$ instead of boolean variables $x_j$, and each probabilistic elementary conjunction can be realised as $3^m$ possible elementary conjunctions. Consequently, the entire PDNF can be realised as one of $3^{m n}$ venjunctions. 

Formally, let $\mathcal{V}(m,n)$ denote the finite set of all $3^{mn}$ possible venjunctions that can be obtained by replacing each weighted literal $x_j^{\xi_{ij}}$ in the PDNF with one of the three deterministic literals $x_j$, $\overline{x}_j$, or $\varepsilon$. A PDNF $Z(\Xi)\in \mathcal{B}^m_{\mathbf{F}}(n)$ with weight matrix $\Xi = (\xi_{ij}) \in \mathbb{R}^{n \times m}$ and mapping $\mathbf{F} = (F_{-1},F_0,F_1)$ induces a discrete probability measure $\mu_{\Xi}$ on $\mathcal{V}(m,n)$.

The measure is constructed as follows. For each position $(i,j)$ (corresponding to $j$-th variable in the $i$-th conjunction) of a PDNF $Z(\Xi)$, the functions $F_{-1}^j, F_0^j, F_1^j$ assign probabilities to the three possible realisations of $x_j^{\xi_{ij}}$. Assuming independence across assignments of different $(i,j)$ pairs, what is a natural assumption when sensor errors or random influences are independent across time and sensors, the probability of a particular venjunction 
\[\langle v\rangle = v_{n1}\wedge\dots\wedge v_{nm}\angle \dots \angle v_{11}\wedge\dots\wedge v_{1m}\]
is given by the product
\[
\mu_{\Xi}(\{\langle v\rangle\}) = \prod_{i=1}^n \prod_{j=1}^m \,
\begin{cases}
	F_{-1}^j(\xi_{ij}) & \text{if } v_{ij} = \overline{x}_j,\\[2pt]
	F_{0}^j(\xi_{ij}) & \text{if } v_{ij} = \varepsilon,\\[2pt]
	F_{1}^j(\xi_{ij}) & \text{if } v_{ij} = x_j,
\end{cases}
\]
where $v_{ij}$ denotes the literal appearing at the $(i,j)$-th position in $\langle v\rangle$, $i$ goes in the reverse order. Thus, the triple $(\mathcal{V}(m,n), 2^{\mathcal{V}(m,n)}, \mu_{\Xi})$ forms a finite probability space. The easy proof of this fact can be found in Appendix \ref{app:prob}.

We estimate the probability that a randomly generated venjunction $\langle v\rangle$ (according to $\mu_\Xi$) is close to a fixed target venjunction $\langle v'\rangle$ of the same length under a suitable metric and give an example of such an estimate in Appendix \ref{app:close}.

\subsection{Probability Measure on the Space of Encoders\label{ssec:probmeasure}}
	In the sensor network applications described in Section~\ref{sec:Markov} and Appendix~\ref{app:samples}, the weight matrix $\Xi$ may itself be random, depending on unobserved physical quantities. In that case, one obtains a two‑stage random process: at first, a random matrix $\Xi$ is generated, and then, conditionally on $\Xi$, a venjunction is drawn according to $\mu_{\Xi}$. In terms of encoders from $\mathcal{E}^m_{\mathbf{F}}(n)$, we draw a random element (i.e., a function) from the set $\mathcal{E}^m_{\mathbf{F}}(n)$ and then encode a venjunction with the function drawn.

The overall distribution on $\mathcal{V}(m,n)$ is then given by the Lebesgue--Stieltjes integral 
\[
\mathbb{P}(A) = \int_{\mathbb{R}^{n\times m}} \mu_{\Xi}(A) \, dP(\Xi),\qquad A \subseteq \mathcal{V}(m,n),
\]
where $P$ denotes the distribution of $\Xi$. This layered construction allows the PDNF formalism to express both the uncertainty in the underlying parameters (encoded in $\Xi$) and the stochastic realisation of the logical observations (encoded by $\mathbf{F}$).

The construction of Subsection~\ref{ss:cont} associates to each weight matrix $\Xi\in\mathbb{R}^{n\times m}$ a function $Z_\Xi\in\mathcal{E}^m_{\mathbf{F}}(n)$, for instance the piecewise constant encoder
\[
Z_\Xi(x)=Z(\xi;x)=\sum_{i=1}^n\sum_{j=1}^m \xi_{ij}\mathbf{1}_{\Delta_{ij}}(x),
\]
where $\xi:\{1,\dots, n\}\rightarrow \mathbb{R}^m$, $\xi_j(i) = \xi_{ij}$. If $\Xi$ is random with distribution $P$ on $\mathbb{R}^{n\times m}$, then the image measure $Q = \Phi_*P$ defines a probability distribution on $\mathcal{E}^m_{\mathbf{F}}(n)$. Here $\Phi_*P$ is the pushforward of $P$ under the map $\Phi:\Xi\mapsto Z_\Xi$, $(\Phi_*P)(A)= P(\Phi^{-1}(A))$.

For a fixed encoder $Z\in \mathcal{E}^m_{\mathbf{F}}(n)$, the probabilities of venjunctions are obtained from the integrals as in (\ref{eq:cont01})--(\ref{eq:cont03}). Let $\mu_Z$ denote the corresponding probability measure on $\mathcal{V}(m,n)$.  
The mapping $Z\mapsto\mu_Z$ can be regarded as a function from $\mathcal{E}^m_{\mathbf{F}}(n)$ into the Banach space $\mathcal{M}(\mathcal{V}(m,n))$ of finite signed measures on $\mathcal{V}(m,n)$ equipped with the total variation norm.  
Under mild measurability conditions (e.g., if $Z\mapsto\mu_Z(A)$ is measurable for every $A\subseteq\mathcal{V}(m,n)$), the Bochner integral \cite{diestel1977vector}
\[
\mathbb{P}(A)=\int_{\mathcal{E}^m_{\mathbf{F}}(n)} \mu_Z(A)\,dQ(Z)
\]
exists and defines a probability measure on $\mathcal{V}(m,n)$. Recall that the Bochner integral generalises the Lebesgue integral to functions taking values in Banach spaces.

Thus, the two‑stage random process can be recast as first drawing a random encoder $Z$ according to $Q$, and then drawing a venjunction according to $\mu_Z$. This viewpoint unifies the discrete and continuous aspects of the PDNF formalism and opens the way to applying functional‑analytic tools such as Bochner integration, stochastic processes in function spaces, and ergodic theory. We will continue this discussion in Section \ref{sec:Markov}. 

\begin{rem}
To quantify the uncertainty encoded in a PDNF, we can introduce the entropy‑based measure.

For a fixed encoder $Z_\Xi$ (and hence fixed probabilities $p_{tj}^{(\ell)} = F_\ell^j(\xi_{tj})$), the Shannon entropy at each position $(t,j)$ is $H_{tj} = -\sum_{\ell\in\{-1,0,1\}} p_{tj}^{(\ell)} \log p_{tj}^{(\ell)}$. Summing over all positions gives the total entropy
	\[
	H_{\text{tot}}(\Xi) = \sum_{t=1}^{n}\sum_{j=1}^{m} H_{tj},
	\]
which vanishes for a classical DNF (all probabilities are $0$ or $1$) and is maximal when the three outcomes are equally probable.
\end{rem}

\section{PDNFs as Generators of Finite Languages\label{sec:regexp}}

A PDNF $Z$ with fixed weight matrix $\Xi$ can be viewed as a probabilistic generator of a finite language over the alphabet $\Sigma = \{\overline{x}_j, \varepsilon, x_j\}_{j=1}^m$.  
Each realisation $v$ is a word of length $n$ (where each position corresponds to a time step) over this alphabet, and the set of all possible words is
\[
L(Z) = \{ \langle v\rangle \in \mathcal{V}(m,n) \mid \mu_\Xi(\{\langle v\rangle\}) > 0 \}.
\]
Thus $L(Z)$ is the support of the probability measure $\mu_\Xi$, and its size $|L(Z)|$ can range from $1$ (deterministic case) up to $3^{nm}$.

As the choices at different positions are independent, $L(Z)$ is a Cartesian product of local languages:
\[
L(Z) = \prod_{i=1}^n \prod_{j=1}^m L_{ij},
\]
where each $L_{ij}$ is the set of literals that can appear at position $(i,j)$, i.e.
\[
L_{ij} = \{ \ell \in \{\overline{x}_j,\varepsilon,x_j\} \mid F_\ell^j(\xi_{ij}) > 0 \}.
\]
Consequently, $L(Z)$ is a finite regular language, and its regular expression is simply the concatenation over $i,j$ of the alternatives allowed at each position.

If we interpret the PDNF as a template for generating plausible hypotheses, then $L(Z)$ is precisely the set of hypotheses that are considered. A target statement $v'$ belongs to $L(Z)$ iff for every $(i,j)$ the literal $v'_{ij}$ is allowed by the corresponding weight, i.e. $F_{v'_{ij}}^j(\xi_{ij}) > 0$.  
Thus, the choice of weights defines a filter on the space of all possible venjunctions.

Now suppose we do not know the weights but observe independent draws from $\mu_\Xi$.  
As discussed in Section~\ref{sec:automata}, after sufficiently many observations we can, with high confidence, identify the whole set $L(Z)$ (see Proposition \ref{prop:regexp}).  

Once $L(Z)$ is known, we have an explicit enumeration of all possible behaviours of the system.  
If we also wish to know which of these behaviours are actually useful or true, further testing or logical reasoning is required, as the PDNF itself only provides the candidate set.

The language $L(Z)$ can be described by a probabilistic regular expression of the form
\[
\bigwedge_{i=1}^n \bigwedge_{j=1}^m ( \overline{x}_j \mid \varepsilon \mid x_j )\quad \text{but with some alternatives removed},
\]
where the removal is determined by the condition $F_\ell^j(\xi_{ij}) > 0$.

\begin{exa}
	A Soviet Story.
\end{exa}
Let us illustrate how a PDNF can serve as a template for generating hypotheses with prescribed probabilities. Consider a simple causal narrative: ``Vlad wrote a denunciation, and I was fired''.  
In a fully deterministic setting, this could be represented as a venjunction  
\[
(F \wedge \neg J \wedge \neg D) \;\angle\; (W \wedge \neg O) \;\angle\; (V \wedge \neg I \wedge \neg P),
\]
where the variables have the following meanings:
\begin{itemize}[label=$\triangleright$]
	\item $V$: ``Vlad creates a denunciation'', $I$: ``Ivan creates a denunciation'', $P$: ``Peter creates a denunciation'';
	\item $W$: ``written denunciation", $O$: ``oral report'';
	\item $F$: ``I was fired'', $J$: ``I was imprisoned'', $D$: ``I was demoted''.
\end{itemize}
The $\angle$ operator indicates that the events occur in this order. In reality, however, we may be uncertain about many details. Suppose we know (from experience or prior data) the following probabilities:
\begin{itemize}[label=$\triangleright$]
	\item the author is Vlad with probability $1/2$, Ivan with $1/3$, Peter with $1/6$;
	\item the communication is a written denunciation with probability $1/3$, an oral report with $2/3$;
	\item the consequence is firing, imprisonment, or demotion, each with probability $1/3$.
\end{itemize}

A PDNF can encode these possibilities by assigning weights to each variable in each conjunction. The example of the corresponding piecewise linear encoders $Z_S \in \mathcal{E}^8_{\mathbf{F}}(3)$ and $U_S \in \mathcal{E}(3\times 8)$ can be seen at Fig. \ref{fig:llp}. 
\begin{figure}[tbph]
	\centering
	\begin{subfigure}{0.5\textwidth}
	\includegraphics[width=\linewidth]{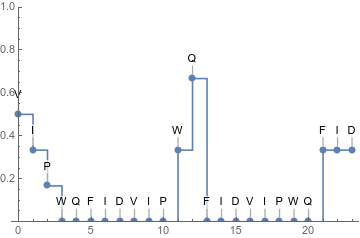}
	\caption{}
	\end{subfigure}
	\begin{subfigure}{0.35\textwidth}
	\includegraphics[width=\linewidth]{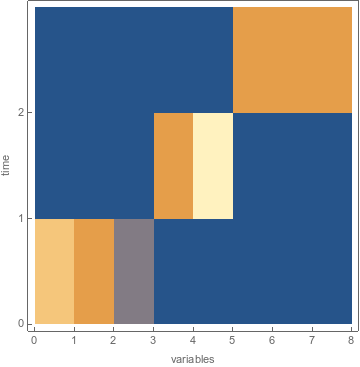}
	\caption{}
	\end{subfigure}
	\caption{Piecewise linear encoders $Z_S\in \mathcal{E}^8_{\mathbf{F}}(3)$ (A) and $U_S\in \mathcal{E}_{\mathbf{F}}(3\times 8)$ (B) for the ``Soviet Story'' PDNF}
	\label{fig:llp}
\end{figure}

Thus, the PDNF provides a compact representation of a whole set of related narratives, allowing us to explore, for example, the likelihood of the sequence ``Ivan wrote an oral report, and I was imprisoned'' or to test which variants are consistent with additional evidence.

\section{Probabilistic Disjunctive Normal Forms as Automata Arrays\label{sec:automata}}

Let us recall the definition of a deterministic finite automaton. It is the sextuple 
\[(S, \mathcal{X}, \mathcal{Y}, s^0, \delta , \lambda),\]
where $S$ is the set of internal states, $\mathcal{X}$ is the input alphabet, $\mathcal{Y}$ is the output alphabet, $s^0\in S$ is the initial state, $\delta: S\times \mathcal{X} \rightarrow S$ is the state-transition function and $\lambda$ is the output function.  If the output function acts as $\lambda: S\times \mathcal{X}\rightarrow \mathcal{Y}$, the automaton is called a Mealy machine; if the output function acts as $\lambda: S\rightarrow \mathcal{Y}$, the automaton is called a Moore machine.

Imagine that we have automata $A_j = (S_j, \mathcal{X}_j,\mathcal{Y}_j, s_j^0, \delta_j, \lambda_j)$, $j=\overline{1,m}$, which produce outputs at time moments $t_i$, $i=\overline{1,n}$. The output alphabet $\mathcal{Y}$ consists of three symbols ``$x$'', ``$\overline{x}$'', and $\varepsilon$ (the empty string). The input alphabet $\mathcal{X}$ is a subset of $\mathbb{R}^p$. Suppose that $\mathcal{X}_j = \cup_{i=1}^{q_j} \mathcal{X}_j^i$ and $\mathcal{X}_j^i\cap \mathcal{X}_j^k = \emptyset$, $i,k=\overline{1,q_j}$, $S_j=\{s_j^0,s_j^1,\dots, s_j^{q_j}\}$. If the automaton $A_j$ receives $x\in \mathcal{X}_j^i$, then $A_j$ goes to the state $s_j^i$ regardless of the previous state. The set of states is also partitioned into the three non-intersecting subsets: $S_j = S_j^{+1}\cup S_j^{-1}\cup S_j^0$, $\lambda_j(s) = x$, $s\in S_j^{+1}$, $\lambda_j(s) = \overline{x}$, $s\in S_j^{-1}$, $\lambda_j(s) = \varepsilon$, $s\in S_j^{0}$. Let $\mathcal{X}_j^{+1}$ denote the union of all $\mathcal{X}_j^{i}$ such that $s_j^i \in S^{+1}_j$, $\mathcal{X}_j^{-1}$ the union of all $\mathcal{X}_j^{i}$ that $s_j^i \in S^{-1}_j$, $\mathcal{X}_j^{0}$ the union of all $\mathcal{X}_j^{i}$ that $s_j^i \in S^{0}_j$. 

Suppose that all inputs are equiprobable. The probability of the output ``$x$'' for the automaton $A_j$ is $P_j^{+1} = \frac{|\mathcal{X}_j^{+1}|}{|\mathcal{X}_j|}$, of the output ``$\overline{x}$'' is $P_j^{-1} = \frac{|\mathcal{X}_j^{-1}|}{|\mathcal{X}_j|}$, of the output $\varepsilon$ is $P_j^{0} = \frac{|\mathcal{X}_j^{0}|}{|\mathcal{X}_j|}$. If each input corresponds to exactly one state, then these probabilities are $\tilde{P}_j^{+1} = \frac{|S_j^{+1}|}{|S_j|}$, $\tilde{P}_j^{-1} = \frac{|S_j^{-1}|}{|S_j|}$, $\tilde{P}_j^{0} = \frac{|S_j^{0}|}{|S_j|}$.

If inputs are unknown, we can estimate the probability of an output according to the aforementioned formulae and symbolically write the automaton's output as $x^\xi$, where $\xi$ is number satisfying (\ref{eqPF1})--(\ref{eqPF3}) for some function $\mathbf{F}$ and corresponding probabilities $P_j^{+1}$, $P_j^{0}$, $P_j^{-1}$. If we have a system of $m$ automata (e.g., sensors) with unknown inputs at some time moment $t\geq 1$, we can symbolically denote it as the probabilistic elementary conjunction $x_1^{\xi_1}\wedge \dots \wedge x_m^{\xi_m}$, where $\xi_1,\dots,\xi_m$ are calculated as before. 

Let now the new state $s_j(t)$ of the automaton $A_j$ depend on the input $x(t)\in \mathcal{X}_j^i$ and on the previous state $s_j(t-1)$. In this case, we should change the partitions $\mathcal{X}_j = \cup_{i=1}^{q_j} \mathcal{X}_j^i(t)$ and $S_j(t) = S_j^{+1}(t)\cup S_j^{-1}(t)\cup S_j^0(t)$ at any time moment, and probabilities $P_j^{+1}$, $P_j^{0}$, $P_j^{-1}$ will be the functions from time $t$ and, therefore, the numbers $\xi_1(t),\dots, \xi_m(t)$ will be also the functions from $t$. We can symbolically represent the automaton, which is functioning at time moments $t=\overline{1,n}$, as the probabilistic DNF from $\mathcal{B}^m_F(n)$
\[x_1^{\xi_1(1)}\wedge \dots \wedge x_m^{\xi_m(1)}\vee\dots  \vee x_1^{\xi_1(n)}\wedge \dots \wedge x_m^{\xi_m(n)}.\]

If sensors are observed by different agents, with each agent observing them over several operating cycles, then the probabilistic DNFs corresponding to each agent's observations can be aggregated according to the formula (\ref{eqsum}). This aggregation provides a combined estimate of the automata being observed and enables the formulation of additional hypotheses regarding their unknown inputs. This problem is also related to the well-known problem of automata identification (see, for example, \cite{Wharton1974}).

\begin{defi}[Bayesian fusion for ternary outcomes]\label{def:fusion}
	Let $p = (p_{-1}, p_0, p_1)$ and $q = (q_{-1}, q_0, q_1)$ be two probability 
	vectors on the ternary set $\{\overline{x}, \varepsilon, x\}$, representing 
	independent pieces of evidence. Their \emph{Bayesian fusion} (assuming a 
	uniform prior) is the probability vector $p \ast q$ defined by
	\[
	(p \ast q)_\ell = \frac{p_\ell \cdot q_\ell}
	{\sum_{\ell' \in \{-1,0,1\}} p_{\ell'} \cdot q_{\ell'}}, 
	\quad \ell \in \{-1,0,1\}.
	\]
\end{defi}

\begin{thm}[Composition of independent observations]\label{thm:composition}
	Let $A$ be a deterministic finite automaton with $m$ output variables 
	$x_1,\dots,x_m$, observed by two independent agents $ag_1$ and $ag_2$ over 
	$n$ synchronised time steps. Suppose each agent records its observations 
	as a PDNF $Z_k \in \mathcal{B}^m_{\mathbf{F}}(n)$ ($k=1,2$), using the same family $\mathbf{F}$ of probability 
	maps. Also, suppose that 
	\[
	F_{\ell}^j(\xi) = \frac{\exp(\alpha_{\ell}^j \xi)}
	{\sum_{\ell'\in \{-1,0,1\}} \exp(\alpha_{\ell'}^j \xi)}.
	\]
	
	Then the PDNF $Z = Z_1 + Z_2$ (with addition defined by (\ref{eqsum})) 
	represents the combined evidence from both agents, in the sense that for 
	every $i=1,\dots,n$ and $j=1,\dots,m$ the weight $\xi_{ij}$ in $Z$ satisfies
	\[
	F_{\ell}^j(\xi_{ij}) \propto 
	F_{\ell}^j(\xi_{ij}^{(1)}) \cdot F_{\ell}^j(\xi_{ij}^{(2)}), 
	\quad \ell \in \{-1,0,1\},
	\]
	where $\xi_{ij}^{(k)}$ are the weights in $Z_k$. 
	
	Moreover, if the agents' observations are conditionally independent given 
	the true state of $A$, then the probability vector
	\[
	\bigl(F_{-1}^j(\xi_{ij}), F_0^j(\xi_{ij}), F_1^j(\xi_{ij})\bigr)
	\]
	is exactly the Bayesian fusion of the corresponding vectors from $Z_1$ and $Z_2$.
\end{thm}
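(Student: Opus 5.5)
The plan is to exploit the exponential (softmax) parametrisation directly: under it, adding weights corresponds to multiplying the unnormalised exponentials. I fix a position $(i,j)$ and write $\xi^{(1)}=\xi^{(1)}_{ij}$, $\xi^{(2)}=\xi^{(2)}_{ij}$. By (\ref{eqsum}) the weight of $Z=Z_1+Z_2$ at that position is $\xi_{ij}=\xi^{(1)}+\xi^{(2)}$, and everything reduces to a computation on the single ternary distribution at $(i,j)$. Independence across positions, as used in the construction of $\mu_\Xi$ in Section~\ref{sec:vasuk}, then lets the statement hold for all $i,j$ simultaneously.

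\textbf{Step 1 (proportionality).} For each $\ell\in\{-1,0,1\}$ I compute
\[
F^j_\ell(\xi^{(1)}+\xi^{(2)})=\frac{\exp(\alpha^j_\ell\xi^{(1)})\exp(\alpha^j_\ell\xi^{(2)})}{\sum_{\ell'}\exp\bigl(\alpha^j_{\ell'}(\xi^{(1)}+\xi^{(2)})\bigr)}.
\]
I then multiply and divide by $N_1N_2$, where $N_k=\sum_{\ell'}\exp(\alpha^j_{\ell'}\xi^{(k)})$. This rewrites the right-hand side as $C\cdot F^j_\ell(\xi^{(1)})F^j_\ell(\xi^{(2)})$, with
\[
C=\frac{N_1N_2}{\sum_{\ell'}\exp\bigl(\alpha^j_{\ell'}(\xi^{(1)}+\xi^{(2)})\bigr)}.
\]
The key observation is that $C$ does not depend on $\ell$. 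This gives the claimed relation $F^j_\ell(\xi_{ij})\propto F^j_\ell(\xi^{(1)}_{ij})F^j_\ell(\xi^{(2)}_{ij})$.

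\textbf{Step 2 (exact Bayesian fusion).} Both sides of the proportionality are to be read as probability vectors: the left side sums to $1$ over $\ell$. Summing the identity of Step~1 over $\ell$ therefore forces
\[
C=\Bigl(\sum_{\ell'}F^j_{\ell'}(\xi^{(1)})F^j_{\ell'}(\xi^{(2)})\Bigr)^{-1}.
\]
Substituting this value of $C$ back gives exactly the formula of Definition~\ref{def:fusion}. To connect this with the probabilistic hypothesis, I interpret each agent's vector $\bigl(F^j_\ell(\xi^{(k)})\bigr)_\ell$ as the posterior on the true ternary output of $A$ at $(i,j)$ given agent $k$'s data, under a uniform prior. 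With a uniform prior this posterior is proportional to the likelihood $\mathbb{P}(\text{data}_k\mid \ell)$. Conditional independence given the true state of $A$ gives
\[
\mathbb{P}(\text{data}_1,\text{data}_2\mid\ell)=\mathbb{P}(\text{data}_1\mid\ell)\,\mathbb{P}(\text{data}_2\mid\ell).
\]
Hence the joint posterior is proportional to the product of the two individual posteriors, which is precisely $p\ast q$. By Step~2 this coincides with $F^j_\cdot(\xi_{ij})$.

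The main obstacle I expect is interpretive rather than computational. The softmax family does not literally satisfy the normalisations of Definition~\ref{def:def01}: it gives $F^j_\ell(0)=1/3$, not $F^j_1(0)=F^j_{-1}(0)=0$. Monotonicity of $F_1$ and $F_{-1}$ requires ordering the exponents, e.g. $\alpha^j_{-1}<\alpha^j_0<\alpha^j_1$. I would handle this by stating that the theorem is proved for this parametric family as a stand-alone weight mapping, relaxing the condition at $\xi=0$. I would also make explicit the modelling assumption that each agent's recorded vector is a uniform-prior posterior, since that is what turns the purely algebraic identity of Step~1 into a Bayesian statement.
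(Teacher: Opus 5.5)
Your proposal is correct and follows the paper's proof essentially step for step. Both fix $(i,j)$ and expand the softmax at $\xi^{(1)}_{ij}+\xi^{(2)}_{ij}$. Both divide numerator and denominator by $N_1N_2$ to get $p^{(1)}_\ell p^{(2)}_\ell\big/\sum_{\ell'}p^{(1)}_{\ell'}p^{(2)}_{\ell'}$, and both identify this with the fusion produced by conditional independence. Your uniform-prior posterior reading only makes explicit what the paper asserts in one line.

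Your caveat is accurate and is not addressed in the paper: the softmax maps give $F^j_\ell(0)=1/3$, and they are monotone as required only when $\alpha^j_{-1}\le\alpha^j_0\le\alpha^j_1$, so they do not literally satisfy Definition~\ref{def:def01}. Independence across positions is not needed, since the claim is pointwise in $(i,j)$.
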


\begin{proof}
	Let us fix a time step $i$ and a variable $x_j$. For agent $k$ ($k=1,2$), 
	the weight $\xi_{ij}^{(k)}$ encodes, through the map $\mathbf{F}$, the 
	probabilities of the three possible realisations of $x_j$ at time $i$ based 
	on the agent's partial information. By Definition~1, these probabilities are
	\[
	p_{\ell}^{(k)} = F_{\ell}^j(\xi_{ij}^{(k)}), \quad \ell \in \{-1,0,1\},
	\]
	with $p_{-1}^{(k)}+p_0^{(k)}+p_1^{(k)}=1$.
	
	Under the assumption that the agents' observations are independent (given the true state of the automaton), the joint probability that both agents would assign to a particular realisation $\ell$ is proportional to the product $p_{\ell}^{(1)} \cdot p_{\ell}^{(2)}$. Normalising gives
	\[
	p_{\ell}^{\text{joint}} = 
	\frac{p_{\ell}^{(1)} p_{\ell}^{(2)}}
	{\sum_{\ell' \in \{-1,0,1\}} p_{\ell'}^{(1)} p_{\ell'}^{(2)}}.
	\]
	
	Now consider the weight $\xi_{ij} = \xi_{ij}^{(1)} + \xi_{ij}^{(2)}$ appearing 
	in the sum $Z = Z_1 + Z_2$. By the definition of the addition in 
	$\mathcal{B}^m_{\mathbf{F}}(n)$, we have
	\begin{multline*}
	F_{\ell}^j(\xi_{ij}) = F_{\ell}^j(\xi_{ij}^{(1)} + \xi_{ij}^{(2)})= \frac{\exp(\alpha_{\ell}^j (\xi_{ij}^{(1)} + \xi_{ij}^{(2)}))}
	{\sum_{\ell'\in \{-1,0,1\}} \exp(\alpha_{\ell'}^j (\xi_{ij}^{(1)} + \xi_{ij}^{(2)}))} =\\ =\frac{\exp(\alpha_{\ell}^j \xi^{(1)}) \exp(\alpha_{\ell}^j \xi^{(2)})}{\sum_{\ell'\in \{-1,0,1\}} \exp(\alpha_{\ell'}^j \xi_{ij}^{(1)})\exp(\alpha_{\ell'}^j \xi_{ij}^{(2)})} = 
	\frac{p_{\ell}^{(1)} p_{\ell}^{(2)}}
	{\sum_{\ell' \in \{-1,0,1\}} p_{\ell'}^{(1)} p_{\ell'}^{(2)}} = p_{\ell}^{\text{joint}}.
	\end{multline*}
	
	Thus the function $F_\ell^j(\xi_{ij})$ coincides with the Bayesian fusion of $F_\ell^j(\xi_{ij}^{(1)})$ and $F_\ell^j(\xi_{ij}^{(2)})$.
\end{proof}

The exponential parametrisation $F_{\ell}^j(\xi) \propto \exp(\alpha_{\ell}^j \xi)$ 
	is natural in this context, as it makes the space 
	$\mathcal{B}^m_{\mathbf{F}}(n)$ a linear exponential family. In that case, the map $\xi \mapsto (F_{-1}^j(\xi), F_0^j(\xi), F_1^j(\xi))$ is the 
	soft‑max transform, widely used in statistical modelling and machine learning. The additive combination of weights then corresponds to 
	multiplying independent probabilities, which is the standard rule for 
	combining independent evidence.
	
The previous theorem shows that if the probability maps are of the exponential (softmax) form, then the addition of weights corresponds exactly to Bayesian fusion. Essentially, it is a well-known result rewritten in the article's framework. Further discussion on this matter can be found in Appendix \ref{app:fusions}.

Let us describe the underlying deterministic structure for a PDNF. Consider a PDNF with a fixed but unknown weight matrix $\Xi$. The set of venjunctions that can actually occur (i.e., have positive probability) is some subset $\mathcal{V}_+ \subseteq \mathcal{V}(m,n)$; let $M = |\mathcal{V}_+| \le 3^{nm}$.  
Assume that from physical or design considerations, we know a uniform lower bound on the probabilities of all occurring venjunctions:
\[
\mu_\Xi(\{\langle v\rangle \}) \ge p_{\min} > 0 \qquad \text{for every } \langle v\rangle \in \mathcal{V}_+.
\]

We perform independent draws of venjunctions according to $\mu_\Xi$.  
A classical result from the coupon collector problem tells us that to see every element of a set of size $M$ at least once with probability at least $1-\delta$, it suffices to take
\[
N \ge \frac{1}{p_{\min}} \ln\frac{M}{\delta}
\]
observations. So, we can formulate the following statement 
\begin{prop}\label{prop:regexp}
Let the chance to see the rarest possible venjunction be at least $p_{\min}$. We need
\[
N \ge \frac{1}{p_{\min}} \ln\frac{3^{nm}}{\delta}
\]
observations to see every possible venjunctions at least once with probability at least $1-\delta$.
\end{prop}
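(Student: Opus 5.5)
The plan is a union bound over the support $\mathcal{V}_+$, followed by the elementary inequality $1-x\le e^{-x}$, and then the crude bound $M\le 3^{nm}$.

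\emph{Per-venjunction failure probability.} Fix $N$ independent draws $\langle v^{(1)}\rangle,\dots,\langle v^{(N)}\rangle$ from $\mu_\Xi$. For each $\langle v\rangle\in\mathcal{V}_+$, let $E_v$ be the event that $\langle v\rangle$ never appears among the draws. Independence gives
\[
\mathbb{P}(E_v)=\bigl(1-\mu_\Xi(\{\langle v\rangle\})\bigr)^N\le (1-p_{\min})^N\le e^{-Np_{\min}},
\]
using the assumed lower bound $\mu_\Xi(\{\langle v\rangle\})\ge p_{\min}$ and $1-x\le e^{-x}$.

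\emph{Union bound.} The event that some possible venjunction is missed is $\bigcup_{v\in\mathcal{V}_+}E_v$. Its probability is at most
\[
\sum_{v\in\mathcal{V}_+}\mathbb{P}(E_v)\le M e^{-Np_{\min}}.
\]
No independence among the $E_v$ is needed, which is why the union bound is the right tool here. Since $\mathcal{V}_+\subseteq\mathcal{V}(m,n)$ and $|\mathcal{V}(m,n)|=3^{nm}$, as established in Section~\ref{sec:vasuk}, we have $M\le 3^{nm}$. The failure probability is therefore at most $3^{nm}e^{-Np_{\min}}$.

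\emph{Solving for $N$.} We require $3^{nm}e^{-Np_{\min}}\le\delta$. This is equivalent to $Np_{\min}\ge\ln(3^{nm}/\delta)$, which is exactly the stated condition $N\ge \frac{1}{p_{\min}}\ln\frac{3^{nm}}{\delta}$. Under it, every element of $\mathcal{V}_+$, and hence the whole language $L(Z)$, is observed with probability at least $1-\delta$. Replacing $3^{nm}$ by $M$ recovers the sharper coupon-collector form quoted before the proposition.

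\emph{Main obstacle.} The only genuine subtlety is interpretive rather than technical: ``every possible venjunction'' must mean the support $\mathcal{V}_+$. Venjunctions outside $\mathcal{V}_+$ have probability zero and cannot be seen, so the statement is vacuous for them. I would also note that $p_{\min}\le 1/M$ automatically, so the hypothesis is consistent. The remaining steps are routine inequalities.
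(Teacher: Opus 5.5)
Your proposal is correct and matches the paper's proof essentially step for step. You bound the probability of missing each venjunction by $(1-p_{\min})^N\le e^{-p_{\min}N}$, take a union bound over $\mathcal{V}_+$ using $M\le 3^{nm}$, and solve $3^{nm}e^{-p_{\min}N}\le\delta$ for $N$; your added remarks that ``every possible venjunction'' means the support $\mathcal{V}_+$ and that $p_{\min}\le 1/M$ are accurate clarifications the paper leaves implicit.
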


\begin{proof}
	The probability that a fixed venjunction $v\in\mathcal{V}_+$ is not observed in $N$ trials is at most $(1-p_{\min})^N \le e^{-p_{\min}N}$.  
	By the union bound, the chance that at least one of the $M$ venjunctions remains unseen does not exceed $M e^{-p_{\min}N} \le 3^{nm} e^{-p_{\min}N}$.  
	Setting this bound equal to $\delta$ gives the required $N$.
\end{proof}

Thus, if a positive lower bound $p_{\min}$ is known a priori, we can guarantee that after $N$ observations (with $N$ chosen as above) we will have seen every possible venjunction with confidence $1-\delta$.  
This is useful, for example, when the PDNF model is used to describe a physical system whose behaviour is known to have a certain minimum probability for each of its possible states.

Suppose that after a sufficiently long sequence of independent observations, we have collected the complete set \(\mathcal{V}_+ = \{\langle v_1\rangle,\dots,\langle v_M\rangle\}\) of all venjunctions that can actually occur.  
Each $\langle v_r\rangle$ is a specific deterministic venjunction, i.e., a sequence of literals (each either \(x_j\), \(\overline{x}_j\), or \(\varepsilon\)).  
We now wish to represent this set as a single deterministic venjunction that also involves a collection of hidden variables \(h_1,\dots,h_l\) (the same for all time steps).  
These hidden variables are abstract placeholders; they may appear as themselves, negated, or be absent in each conjunction, exactly like the ordinary variables \(x_j\).

A natural way to encode the set \(\mathcal{V}_+\) is to introduce one hidden variable combination for each observed venjunction.  
If the hidden variables can take values from a finite set (e.g., each \(h_s\) may be present, negated, or absent), then we can write
\[
\mathcal{V}_+ = \bigvee_{r=1}^{M} \left( \bigwedge_{j=1}^{m} x_j^{a_{rj}} \;\wedge\; \bigwedge_{s=1}^{l} h_s^{b_{rs}} \right),
\]
where the sign $\bigvee$ denotes venjunction and the exponents indicate the form (presence, negation, or absence) of the corresponding variable in the \(r\)-th conjunction.  
The coefficients \(a_{rj}\) are uniquely determined by the observed venjunction \(\langle v_r\rangle \), while the coefficients \(b_{rs}\) can be chosen arbitrarily as long as they are distinct for different \(r\) (so that each conjunction corresponds to a unique combination of hidden variables).  
For instance, if we let \(l = \lceil \log_3 M \rceil\) and interpret each combination of \(b_{rs}\) as a base‑3 representation of the index \(r\), we obtain a valid encoding.

Thus, the existence of such a representation is always guaranteed, but it is by no means unique.  
Without additional information about the physical meaning of the hidden variables, any bijection between \(\mathcal{V}_+\) and the set of hidden‑variable configurations yields a deterministic description.  
In practice, if the hidden variables are known to represent specific unobserved factors, their influence on the observations may impose constraints that select a particular representation. Otherwise, the constructed deterministic venjunction merely serves as a proof of concept that the probabilistic PDNF can be expanded into a deterministic formula with extra variables.

So, the weights \(\xi_{ij}\) encode the probabilities of the hidden states, and the algebraic operations allow us to combine evidence from multiple independent observations, gradually reducing uncertainty. In the limit, when all possible venjunctions have been observed, the PDNF can be replaced by an equivalent deterministic description, albeit one that may require introducing auxiliary variables.

\section{Connection of PDNFs and Markov Processes\label{sec:Markov}}
The weights $\xi_{ij}$ could be not fixed but evolve over time according to some stochastic process. A natural and important case is when each weight follows a random walk. At each time $t>1$, the weight $\xi_{tj} = \eta+\xi_{t-1,j}$ is a realisation of a random walk, and the output probabilities $F_\ell^j(\xi_{tj})$ become random variables themselves.

This construction leads to a two‑level stochastic process (see Subsection~\ref{ssec:probmeasure}). The underlying random walk $\{\xi_j(t)\}_{t\ge 1}$ describes the evolution of the physical quantity measured by sensor $j$. 

The resulting sequence of venjunctions $\{\langle v(t)\rangle \}_{t\ge 1}$ is therefore a hidden Markov model with the random walk as the hidden state and the PDNF specifying the emission probabilities.

Using the continuous representation of PDNFs introduced in Subsection~\ref{ss:cont}, we can consider the mapping $\Pi_Z:\mathcal{RW} \rightarrow \mathcal{E}^m_{\mathbf{F}}(n)$, where $\mathcal{RW}$ is the set of considered random walks, $\Pi_Z (\xi)\mapsto Z(\xi; \cdot)\in \mathcal{E}^m_{\mathbf{F}}(n)$, $\xi \in \mathcal{RW}$. The $\Pi_Z(\xi)$ is an encoder-valued stochastic process in time $t$, with values of form $Z(\xi(t); x)$. The mean function of $\Pi_Z$ is given by the Bochner integral
\[
\mathbb{E}[\Pi_Z(\xi)](x) = \mathbb{E}[Z(\xi;x)] = \int_{\xi \in \mathcal{RW}} Z(\xi;x) \, d\mathbb{P}_\xi,
\]
where $\mathbb{P}_\xi$ is the distribution of $\xi$. For the piecewise constant representation of a PDNF,
\[
(\Pi_Z(\xi))(x) = \sum_{t=1}^n\sum_{j=1}^m \xi_{tj} \mathbf{1}_{\Delta_{tj}}(x),
\]
this simplifies to
\[
\mathbb{E}[\Pi_Z(\xi)](x) = \sum_{t=1}^n\sum_{j=1}^m \mathbb{E}[\xi_{tj}] \,\mathbf{1}_{\Delta_{tj}}(x) = \sum_{t=1}^n\sum_{j=1}^m (t\mathbb{E}\eta+\mathbb{E}\xi_{0,j})\mathbf{1}_{\Delta_{tj}}(x).
\]
Thus, the mean function is again piecewise constant, with step heights given by the expected values of the weights at time $t$. If $\xi_{0,j}$ is deterministic, $\mathbb{E}[\xi_{tj}] = \xi_{0,j} + t\mu$ where $\mu = \mathbb{E}[\eta]$.

The variance of the encoder $\Pi_Z(\xi)$ at a point $x$ is defined as
\[
\operatorname{Var}(\Pi_Z(\xi))(x) = \mathbb{E}\bigl[(\Pi_Z(\xi)(x) - \mathbb{E}[\Pi_Z(\xi)(x)])^2\bigr].
\]

Using the piecewise constant representation
\[
(\Pi_Z(\xi))(x) = \sum_{t=1}^n\sum_{j=1}^m \xi_{tj} \mathbf{1}_{\Delta_{tj}}(x),
\]
and noting that the indicator functions have disjoint supports for different pairs $(t,j)$, we obtain
\[
\operatorname{Var}(\Pi_Z(\xi))(x) = \sum_{t=1}^n\sum_{j=1}^m \operatorname{Var}(\xi_{tj}) \,\mathbf{1}_{\Delta_{tj}}(x).
\]

If $\xi_{tj}$ follows a random walk $\xi_{tj} = \xi_{0j} + \sum_{s=1}^t \eta_s$ with i.i.d. increments $\eta_s$ having variance $\sigma^2 = \operatorname{Var}(\eta_s)$, and initial values $\xi_{0,j}$ (possibly random with variance $\operatorname{Var}(\xi_{0,j})$), then
\[
\operatorname{Var}(\xi_{tj}) = \operatorname{Var}(\xi_{0,j}) + t\sigma^2.
\]
Substituting, we obtain that
\[
\operatorname{Var}(\Pi_Z(\xi))(x) = \sum_{t=1}^n\sum_{j=1}^m \bigl(\operatorname{Var}(\xi_{0,j}) + t\sigma^2\bigr) \mathbf{1}_{\Delta_{tj}}(x).
\]

For a symmetric random walk ($\mu=0$), the mean remains constant while the variance grows, reflecting increasing uncertainty over time. For a random walk with drift, both the mean and the variance grow linearly.

Moderately applied examples for this theory will be given in Appendix~\ref{app:samples}. We can also prove the existence of the mean value for a PDNF encoder in the general case.

\begin{thm}
Let \(\mathcal{RW}\) be a set of trajectories of a random walk with a measure $\mathbb{P}_\xi$ and with finite first moments $\mathbb{E}[|\xi_{tj}|] < \infty$ for all $t,j$.

We assume that \(\mathcal{E}^m_{\mathbf{F}}(n) \subseteq L_1([0,nm])\) is a Banach space with the norm \(\|f\|_1 = \int_{0}^{nm} |f(x)|\,dx\).

Also, assume that the encoder construction satisfies a bound of the form
\begin{equation}\label{eq:RWass}
\|\Pi_Z(\xi)\|_1 \le C \sum_{t=1}^n \sum_{j=1}^m |\xi_{tj}|
\end{equation}
for some constant \(C>0\) (this holds for the piecewise linear representation, and more generally if the dependence on \(\xi\) is Lipschitz in the \(L_1\) norm) and $\Pi_Z(\xi)$ is continuous on $\xi$ with fixed $Z$.

Then the Bochner integral
\[
\mathbb{E}[\Phi] = \int_{\mathcal{RW}} \Pi_Z(\xi) \, d\mathbb{P}_\xi
\]
exists.
\end{thm}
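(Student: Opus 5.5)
We apply Bochner's integrability criterion \cite{diestel1977vector}. A function $f:\Omega\to B$ from a probability space into a Banach space is Bochner integrable if and only if it is strongly measurable and $\int_\Omega\|f\|\,d\mathbb{P}<\infty$. We take $\Omega=\mathcal{RW}$ with the measure $\mathbb{P}_\xi$, take $B=\mathcal{E}^m_{\mathbf{F}}(n)$ with the norm $\|\cdot\|_1$, and $f=\Pi_Z$. Only finitely many coordinates $\xi_{tj}$, $t=\overline{1,n}$, $j=\overline{1,m}$, enter $\Pi_Z(\xi)$. So $\Pi_Z$ factors as $\xi\mapsto\Xi=(\xi_{tj})\in\mathbb{R}^{n\times m}$ followed by $\Xi\mapsto Z(\xi;\cdot)$. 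The first map is measurable because the $\xi_{tj}$ are random variables. The second map is continuous into $L_1([0,nm])$ by assumption.

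\textbf{Step 1 (strong measurability).} We show that $\Pi_Z$ is Borel measurable. It is the composition of a measurable map into $\mathbb{R}^{n\times m}$ with a continuous map into $L_1$, hence measurable. We also show that it is essentially separably valued. Its range lies in the continuous image of the separable space $\mathbb{R}^{n\times m}$, and that image is separable. (In any case $L_1([0,nm])$ is itself separable, so separability of the range is automatic.) By the Pettis measurability theorem, $\Pi_Z$ is therefore strongly measurable. One can also proceed directly: approximate $\Xi$ by simple $\mathbb{R}^{n\times m}$-valued random matrices $\Xi_k\to\Xi$ pointwise, and note that $\Pi_Z(\Xi_k)$ are simple $L_1$-valued functions converging pointwise in $\|\cdot\|_1$ by continuity.

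\textbf{Step 2 (integrability of the norm).} By the bound (\ref{eq:RWass}) and linearity of the expectation,
\[
\int_{\mathcal{RW}}\|\Pi_Z(\xi)\|_1\,d\mathbb{P}_\xi\le C\sum_{t=1}^n\sum_{j=1}^m\mathbb{E}[|\xi_{tj}|]<\infty.
\]
This is a finite sum of finite moments. Here $\xi\mapsto\|\Pi_Z(\xi)\|_1$ is measurable, since it is the composition of the measurable $\Pi_Z$ with the continuous norm.

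\textbf{Step 3 (conclusion).} Bochner's theorem then gives existence of $\mathbb{E}[\Pi_Z(\xi)]\in\mathcal{E}^m_{\mathbf{F}}(n)$; the statement writes this integral as $\mathbb{E}[\Phi]$. It also yields $\|\mathbb{E}\Pi_Z\|_1\le\mathbb{E}\|\Pi_Z\|_1$. One further point: the integral must lie in $\mathcal{E}^m_{\mathbf{F}}(n)$ and not merely in $L_1$. This holds because $\mathcal{E}^m_{\mathbf{F}}(n)$ is assumed to be a Banach space, hence a closed subspace, and Bochner integrals of functions valued in a closed subspace stay in it, being limits of integrals of simple functions.

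\textbf{Main obstacle.} The only delicate point is measurability. One must make precise the $\sigma$-algebra on $\mathcal{RW}$ and check that the coordinate maps $\xi\mapsto\xi_{tj}$ are measurable; this is implicit in calling them random variables with finite first moments. Once that is granted, the rest is a direct application of the standard criterion.
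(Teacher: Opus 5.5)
Your proposal is correct and follows the paper's proof essentially step for step. Both verify Bochner's criterion: strong measurability comes from composing the measurable coordinates with the continuous encoder map and using separability of $L_1$ (Pettis), and integrability of the norm comes from the assumed linear bound on $\|\Pi_Z(\xi)\|_1$ together with the finite first moments. Your explicit simple-function approximation $\Pi_Z(\Xi_k)$ and your remark that the integral stays in the assumed-closed subspace $\mathcal{E}^m_{\mathbf{F}}(n)$ are small additions that spell out what the paper leaves implicit.
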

\begin{proof}	
	Let $(\Omega,\mathcal{F})$ be a measurable space and $X$ a Banach space. A function $s:\Omega\to X$ is called \emph{simple} if it can be written as a finite linear combination of indicator functions of measurable sets:
	\[
	s(\omega)=\sum_{k=1}^{n} \mathbf{1}_{A_k}(\omega)\, x_k,
	\]
	where $A_k\in\mathcal{F}$ are pairwise disjoint measurable sets and $x_k\in X$ are fixed vectors.
	
	We need to prove strong measurability of \(\Pi_Z\), that is a sequence of simple functions \(\Pi_k: \mathcal{RW} \to \mathcal{E}^m_{\mathbf{F}}(n)\) such that \(\|\Pi_k(\xi) - \Pi_Z(\xi)\|_1 \to 0\) for \(\mathbb{P}_\xi\)-almost every \(\xi\). 
	
	Also, it is necessary to prove the integrability of the norm: 
	\[\int_{\mathcal{RW}} \|\Pi_Z(\xi)\|_1 \, d\mathbb{P}_\xi < \infty.\]

\paragraph{Strong measurability.}
The map $\xi \mapsto \Pi_Z(\xi)$ is a composition of the random walk trajectory $\xi$ (which is a measurable function) with the construction of the encoder $Z(\xi;\cdot)$ from Subsection~\ref{ss:cont}. This encoder depends continuously on $\xi$, so it is Borel measurable.  Moreover, $\mathcal{E}^m_{\mathbf{F}}(n)\subset L_1$ is separable, so the image of $\Pi_Z$ is separably valued. Hence $\Pi_Z$ is strongly measurable by virtue of Pettis theorem\cite[Chapter II, Theorem 1]{diestel1977vector}: it is the pointwise limit of simple functions (for instance, by approximating the continuous map by piecewise constant functions in the Banach space).

\paragraph{Integrability of the norm.}
From the assumption (\ref{eq:RWass}), we have
\begin{equation}
\label{eq:sumRW}
\int_{\mathcal{RW}} \|\Pi_Z(\xi)\|_1 \, d\mathbb{P}_\xi \le C \sum_{t=1}^n \sum_{j=1}^m \mathbb{E}[|\xi_{tj}|].
\end{equation}
If the random walk has finite first moments (e.g., bounded increments), then \(\mathbb{E}[|\xi_{tj}|] < \infty\) for all \(t,j\), and the sum (\ref{eq:sumRW}) is finite. 
\end{proof}

The mean encoder function $\mathbb{E}[\Pi_Z(\xi)](x)$ obtained via the Bochner integral provides, through its values on each subinterval $\Delta_{tj}$, the expected weight $\mathbb{E}[\xi_{tj}]$ at time $t$ for sensor $j$.  
These expected weights characterise the most stable behaviour of the system under the random evolution of the hidden parameters. If the expected weight is significantly positive, the system tends to output $x_j$; if significantly negative, it tends to output $\overline{x}_j$; if near zero, the output is most likely $\varepsilon$. Thus, the mean function can be used for decision-making: for example, one may trigger an alarm or take preventive action when the expected weight exceeds a certain threshold, indicating a high probability of a critical event.  
In the context of sensor networks, this allows us to base decisions on the average long‑term behaviour of sensors. In terms of the hypothesis‑generation viewpoint of Section \ref{sec:regexp}, this mean value corresponds to a ``central'' statement (i.e., the statement that best represents the entire set of possible venjunctions in terms of average behaviour) among the set of possible ones.

Since the encoder values $Z(\xi;\cdot)$ are interpreted as a tendency towards the occurrence of a variable (positive) or its negation (negative), averaging these values over the distribution $\mathbb{P}_\xi$ may lead to loss of information. More meaningful characteristics are obtained by considering the positive and negative parts separately. Let $Z_+(\xi;\cdot) = \max(Z,0)$ and $Z_-(\xi;\cdot) = \min(Z,0)$, so that $Z(\xi;\cdot) = Z_+(\xi;\cdot) + Z_-(\xi;\cdot)$. The expectations $\mathbb{E}[Z_+(\xi;\cdot)]$ and $\mathbb{E}[Z_-(\xi;\cdot)]$ (computed as Bochner integrals of the corresponding functions) describe the average tendency for the variable to appear positively and negatively, respectively. Their ratio or difference serves as a measure of PDNF asymmetry.

\section{Conclusion}

In this paper, we introduced probabilistic disjunctive normal forms (PDNFs) as a formalism that combines Boolean logic and probabilistic modelling. Our main contribution is a novel continuous representation: each PDNF is encoded by a piecewise-continuous function on a partitioned interval, where integrals over subintervals determine the probabilities of PDNF literals. This perspective turns the set of PDNFs into a Banach space, allowing us to treat random PDNFs as stochastic processes taking values in function spaces. We connected PDNFs with weights evolving as random walks to hidden Markov models, and computed explicit expressions for the mean and variance of the corresponding encoder process.

Also, we establish several minor results:
\begin{itemize}[label=$\triangleright$]
	\item We proved that each PDNF induces a probability measure on the set of temporal logic statements (venjunctions), and that the expectation of a random encoder exists as a Bochner integral under mild conditions.
	\item We showed that the addition of PDNFs corresponds, under an exponential parametrisation, to Bayesian fusion of independent evidence, and we characterised the exponential family as the only one with this property.
	\item We obtained quantitative bounds for the number of observations needed to identify all possible venjunctions.
	\item We also discussed applications to automata identification and to generating families of plausible logical statements.
\end{itemize}

The proposed formalism bridges discrete logic, probability theory, and functional analysis. Future work may explore continuous-time versions, the use of more general function spaces, and applications in statistical learning and verification. 

\bibliography{ESU1}	

@article{KarnikMendel2001,
	title = {Centroid of a type-2 fuzzy set},
	journal = {Information Sciences},
	volume = {132},
	number = {1},
	pages = {195-220},
	year = {2001},
	issn = {0020-0255},
	doi = {https://doi.org/10.1016/S0020-0255(01)00069-X},
	url = {https://www.sciencedirect.com/science/article/pii/S002002550100069X},
	author = {Nilesh N. Karnik and Jerry M. Mendel}
}

@misc{mertens2025parametric,
	title={Compositional Reasoning for Parametric Probabilistic Automata}, 
	author={Hannah Mertens and Tim Quatmann and Joost-Pieter Katoen},
	year={2025},
	eprint={2506.08525},
	archivePrefix={arXiv},
	primaryClass={cs.LO},
	url={https://arxiv.org/abs/2506.08525}, 
}

@inproceedings{derkinderen2025compiling,
	title     = {Circuit-Aware d-DNNF Compilation},
	author    = {Derkinderen, Vincent and Lagniez, Jean-Marie},
	booktitle = {Proceedings of the Thirty-Fourth International Joint Conference on
	Artificial Intelligence, {IJCAI-25}},
	publisher = {International Joint Conferences on Artificial Intelligence Organization},
	editor    = {James Kwok},
	pages     = {4454--4462},
	year      = {2025},
	month     = {8},
	note      = {Main Track},
	doi       = {10.24963/ijcai.2025/496},
	url       = {https://doi.org/10.24963/ijcai.2025/496},
}

@inproceedings{kwiatkowska2025probabilistic,
	author = {Jamroga, Wojciech and Kwiatkowska, Marta and Penczek, Wojciech and Petrucci, Laure and Sidoruk, Teofil},
	title = {Probabilistic Timed ATL},
	year = {2025},
	isbn = {9798400714269},
	publisher = {International Foundation for Autonomous Agents and Multiagent Systems},
	address = {Richland, SC},
	booktitle = {Proceedings of the 24th International Conference on Autonomous Agents and Multiagent Systems},
	pages = {1051–1059},
	numpages = {9},
	location = {Detroit, MI, USA},
	series = {AAMAS '25}
}

@book{diestel1977vector,
	title={Vector Measures},
	author={Diestel, Joseph and Uhl, Jerry J.},
	series={Mathematical Surveys},
	volume={15},
	year={1977},
	publisher={American Mathematical Society},
	address={Providence, RI},
	note={}
}

@book{vasyukevich2011digital,
	author    = {Vadim Vasyukevich},
	title     = {Asynchronous Operators of Sequential Logic: Venjunction \& Sequention. Digital Circuit Analysis and Design},
	year      = {2011},
	publisher = {Springer Berlin, Heidelberg},
	series    = {Lecture Notes in Electrical Engineering},
	volume    = {124},
	isbn      = {978-3-642-21610-7},
	doi       = {10.1007/978-3-642-21611-4},
}

@article{shmulevich2002probabilistic,
	author    = {Ilya Shmulevich and Edward R. Dougherty and Seungchan Kim and Wei Zhang},
	title     = {Probabilistic Boolean Networks: a rule-based uncertainty model for gene regulatory networks},
	journal   = {Bioinformatics},
	year      = {2002},
	volume    = {18},
	number    = {2},
	pages     = {261--274},
	month     = {feb},
	doi       = {10.1093/bioinformatics/18.2.261},
	pmid      = {11847074}
}

@book{shafer1976mathematical,
	author    = {Glenn Shafer},
	title     = {A Mathematical Theory of Evidence},
	year      = {1976},
	publisher = {Princeton University Press},
	address   = {Princeton, NJ},
	url={https://www.jstor.org/stable/j.ctv10vm1qb}
}

@article{richardson2006markov,
	author    = {Matthew Richardson and Pedro Domingos},
	title     = {Markov Logic Networks},
	journal   = {Machine Learning},
	year      = {2006},
	volume    = {62},
	number    = {1-2},
	pages     = {107--136},
	month     = {feb},
	doi       = {10.1007/s10994-006-5833-1},
	issn      = {1573-0565},
	publisher = {Springer}
}

@inbook{Drossel2008,
	author = {Drossel, Barbara},
	publisher = {John Wiley \& Sons, Ltd},
	isbn = {9783527626359},
	title = {Random Boolean Networks},
	booktitle = {Reviews of Nonlinear Dynamics and Complexity},
	chapter = {3},
	pages = {69-110},
	doi = {https://doi.org/10.1002/9783527626359.ch3},
	url = {https://onlinelibrary.wiley.com/doi/abs/10.1002/9783527626359.ch3},
	eprint = {https://onlinelibrary.wiley.com/doi/pdf/10.1002/9783527626359.ch3},
	year = {2008}
}

@article{Kuznetsov2025PDNF,
	author  = {Kuznetsov, A.V.},
	title   = {On statement-valued random variables},
	journal = {Boletín de la Sociedad Matemática Mexicana},
	year    = {2026},
	volume  = {32},
	number  = {8},
	pages   = {8},
	doi     = {10.1007/s40590-025-00840-7}
}

@article{SERGEEV2021134,
	title = {Identification of Integrated Rating Mechanisms As An Approach To Discrete Data Analysis},
	journal = {IFAC-PapersOnLine},
	volume = {54},
	number = {13},
	pages = {134-139},
	year = {2021},
	note = {20th IFAC Conference on Technology, Culture, and International Stability TECIS 2021},
	issn = {2405-8963},
	doi = {https://doi.org/10.1016/j.ifacol.2021.10.433},
	url = {https://www.sciencedirect.com/science/article/pii/S240589632101870X},
	author = {V.A. Sergeev and N.A. Korgin}
}

@article{Wharton1974,
	author = {R. M. Wharton},
	title = {Approximate Language Identification},
	journal = {Information and Control},
	volume = {26},
	pages = {236--255},
	year = {1974},
	doi = {10.1016/S0019-9958(74)91369-2}, 
	url = {https://www.sciencedirect.com/science/article/pii/S0019995874913692}
}

@article{NILSSON198671,
	title = {Probabilistic logic},
	journal = {Artificial Intelligence},
	volume = {28},
	number = {1},
	pages = {71-87},
	year = {1986},
	issn = {0004-3702},
	doi = {https://doi.org/10.1016/0004-3702(86)90031-7},
	url = {https://www.sciencedirect.com/science/article/pii/0004370286900317},
	author = {Nils J. Nilsson}
}

@article{Schumann04052018,
	author = {Andrew Schumann and Alexander V. Kuznetsov},
	title = {Foundations of mathematics under neuroscience conditions of lateral inhibition and lateral activation},
	journal = {International Journal of Parallel, Emergent and Distributed Systems},
	volume = {33},
	number = {3},
	pages = {237--256},
	year = {2018},
	publisher = {Taylor \& Francis},
	doi = {10.1080/17445760.2018.1439490},
	URL = { 
	https://doi.org/10.1080/17445760.2018.1439490   
	},
	eprint = { 
	https://doi.org/10.1080/17445760.2018.1439490
	}
}

@article{sitnikov2002method,
	title = {A Method For Knowledge Representation And Discovery Based On Composing And Manipulating Logical Equations},
	author = {Sitnikov, D. E. and D'Cruz, B. and Sitnikova, P. E.},
	journal = {WIT Transactions on Information and Communication Technologies},
	volume = {28},
	pages = {},
	year = {2002},
	doi = {10.2495/DATA020031},
	copyright = {WIT Press}
}

\appendix
\section{Discussion on the PDNFs Algebra\label{app:algebra}}
Consider a PDNF as representing a series of observations (e.g., sensor readings over time) for features $i\in \{1, \dots, m\}$. Let the exponent $\xi_i$ in a PDNF $Z_1$ be interpreted as a cumulative weight or parameter summarising the influence of feature $i$ in the series $Z_1$, and $\eta_i$ the corresponding weight of feature $i$ in series $Z_2$. If we combine the evidence from series $Z_1$ and $Z_2$ -- assuming they pertain to the same observational framework -- the combined influence for each feature is naturally modelled by the sum $\xi_i+\eta_i$, which corresponds exactly to the addition defined for elementary conjunctions (and extended to PDNFs via (\ref{eqsum})). Consequently, the norm (\ref{eq:norm1}) provides a measure of the total absolute weight of a PDNF. A small distance between two PDNFs, measured by $\|Z_1-Z_2\|_1$, indicates that the corresponding observational series assign similar net weights to the features, implying their statistical similarity. 

Consider a fixed position \((i,j)\) and a sequence of independent PDNFs \(Z_1,Z_2,\ldots\) with weights \(\xi_{ij}^{(k)}\) at that position. Assume that all these weights have the same sign, e.g., \(\xi_{ij}^{(k)}\ge 0\) for every \(k\) (or all \(\le 0\)).  
Form the cumulative PDNF \(Z^{(N)} = Z_1 + \cdots + Z_N\); its weight at position \((i,j)\) is \(\xi_{ij}^{(N)} = \sum_{k=1}^N \xi_{ij}^{(k)}\).

If the weights are non‑negative and there exists a constant \(c>0\) such that infinitely many of them satisfy \(\xi_{ij}^{(k)}\ge c\), then \(\xi_{ij}^{(N)}\to +\infty\) as \(N\to\infty\). This condition holds, for instance, when the weights are bounded below by a positive constant in every observation, or when the probability of obtaining a weight above some positive threshold is positive (in which case almost surely infinitely many observations occur).

By the monotonicity properties of the maps \(F_\ell^j\) (Definition~\ref{def:def01}) and the natural assumption that \(F_1^j(\xi)\to 1\) and \(F_0^j(\xi),F_{-1}^j(\xi)\to 0\) as \(\xi\to+\infty\), we obtain
\[
F_1^j(\xi_{ij}^{(N)}) \longrightarrow 1,\qquad 
F_0^j(\xi_{ij}^{(N)}) \longrightarrow 0,\qquad 
F_{-1}^j(\xi_{ij}^{(N)}) \longrightarrow 0.
\]
Hence, for sufficiently large \(N\), the realisation at this position is almost surely \(x_j\); the behaviour becomes deterministic.

Similarly, if all weights are non‑positive and infinitely many are \(\xi_{ij}^{(k)}\le -c\), then \(\xi_{ij}^{(N)}\to -\infty\) and, under the analogous asymptotic behaviour of \(F_{-1}^j\), we get
\[
F_{-1}^j(\xi_{ij}^{(N)}) \longrightarrow 1,\qquad 
F_0^j(\xi_{ij}^{(N)}) \longrightarrow 0,\qquad 
F_1^j(\xi_{ij}^{(N)}) \longrightarrow 0,
\]
so the output tends almost surely to \(\overline{x}_j\).

This result formalises the intuition that repeated evidence of the same type (e.g., a sensor consistently triggered) gradually eliminates uncertainty and drives the system toward a deterministic outcome. 

\section{Discussion on the PDNF Structure\label{app:struct}} 
Essentially, a disjunctive normal form of $m$ variables can be represented as a tree graph (see Fig.~\ref{fig:figa}). The leaves of this graph correspond to variables, the root vertex represents the disjunction symbol $\vee$, the root's children are conjunction symbols $\wedge$, and the children of each conjunction symbol are either negation symbols $\neg$ or identity symbols $\mathrm{id}$ (where $\mathrm{id}(x)=x$). The edges of the tree are unmarked; thus, children of the same parent can be placed in any order without affecting the logical meaning.

A PDNF modifies this structure in two key ways (see Fig.~\ref{fig:figb}): first, negation or identity symbols are replaced by real-valued weights $\xi_i$; second, and more importantly, the edges connecting the root to the conjunction symbols are marked (e.g., with time indices). These marks record, for instance, the observation moment of each conjunction, making the order of the conjunctions fixed and semantically significant.

\begin{figure}[htbp]
	\centering
	\begin{subfigure}{0.45\textwidth}
		\includegraphics[width=\textwidth]{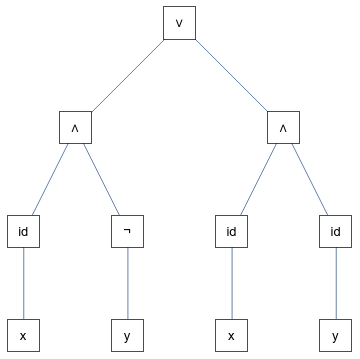}
		\caption{}
		\label{fig:figa}
	\end{subfigure}
	\begin{subfigure}{0.45\textwidth}
		\includegraphics[width=\textwidth]{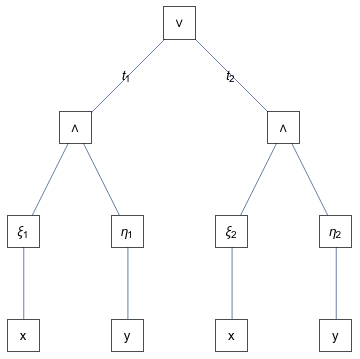}
		\caption{}
		\label{fig:figb}
	\end{subfigure}
	\caption{Comparison of the classical DNF $x\wedge \overline{y}\vee x\wedge y$ (a) and the PDNF  $x^{\xi_1}\wedge y^{\eta_1}\vee x^{\xi_2}\wedge y^{\eta_2}$ (b)}
	\label{fig:trees}
\end{figure}

\section{Discussion on the PDNFs Measure\label{app:prob}} 
\begin{prop}\label{prop:measure}
	
	The function $\mu_{\Xi}:2^{\mathcal{V}(m,n)} \rightarrow \mathbb{R}$ extended additively to arbitrary subsets $A \subseteq \mathcal{V}(m,n)$ as
	\[
	\mu_{\Xi}(A) = \sum_{\langle v\rangle \in A} \mu_{\Xi}(\{\langle v\rangle \}).
	\] 
	is a probability measure on the measurable space $(\mathcal{V}(m,n), 2^{\mathcal{V}(m,n)})$.
\end{prop}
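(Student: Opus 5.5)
We have to check three things for $\mu_\Xi$: every value is non-negative, the total mass $\mu_\Xi(\mathcal{V}(m,n))$ equals $1$, and $\mu_\Xi$ is countably additive on $2^{\mathcal{V}(m,n)}$. The set $\mathcal{V}(m,n)$ is finite, with $3^{mn}$ elements. So countable additivity reduces to finite additivity, because a countable family of pairwise disjoint subsets has at most finitely many nonempty members. Finite additivity follows at once from the definition of $\mu_\Xi(A)$ as a sum over the points of $A$: for disjoint $A,B$ the sum over $A\cup B$ splits into the sum over $A$ plus the sum over $B$. Also $\mu_\Xi(\emptyset)=0$, since it is the empty sum.

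Non-negativity is also immediate. Each singleton value $\mu_\Xi(\{\langle v\rangle\})$ is a product of numbers $F_\ell^j(\xi_{ij})$, and each of these lies in $[0,1]$ by Definition~\ref{def:def01}, because every $F_\ell$ maps into $[0,1]^m$. Sums of non-negative numbers are non-negative.

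The one substantive step, and really the only place where anything happens, is normalisation. Identify $\mathcal{V}(m,n)$ with the Cartesian product $\prod_{i=1}^n\prod_{j=1}^m\{\overline{x}_j,\varepsilon,x_j\}$. Equivalently, identify each venjunction with the array of indices $(\ell_{ij})\in\{-1,0,1\}^{n\times m}$ that selects its literal at each position $(i,j)$. The order reversal of $i$ in the venjunction notation is just a relabelling and does not affect this bijection. Under the identification,
\[
\mu_\Xi(\mathcal{V}(m,n))=\sum_{(\ell_{ij})\in\{-1,0,1\}^{n\times m}}\ \prod_{i=1}^n\prod_{j=1}^m F^j_{\ell_{ij}}(\xi_{ij}).
\]
By the distributive law, the sum over the product index set of a product of factors, each depending on only one coordinate, factorises as $\prod_{i,j}\sum_{\ell\in\{-1,0,1\}}F^j_\ell(\xi_{ij})$. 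This can be formalised by induction on the number $nm$ of positions, peeling off one position at a time. Each inner factor equals $1$ by the constraint $F^j_{-1}(x)+F^j_0(x)+F^j_1(x)=1$ from Definition~\ref{def:def01}. Hence the total mass is $1$.

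Combining the three facts, $\mu_\Xi$ is a probability measure on $(\mathcal{V}(m,n),2^{\mathcal{V}(m,n)})$. The only point requiring care is the bijection between venjunctions and literal-index arrays. The paper implicitly treats distinct choices of literals as distinct venjunctions; if two choices could coincide as formulas (for instance, when all literals in a conjunction are $\varepsilon$), I would define $\mathcal{V}(m,n)$ as the set of formal arrays of literals, which is what the count $3^{mn}$ presupposes.
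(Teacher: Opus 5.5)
Your proof is correct and follows the same route as the paper. Non-negativity comes from $F_\ell^j\in[0,1]$, normalisation comes from factorising the sum over $\{-1,0,1\}^{n\times m}$ into $\prod_{i,j}\bigl(F_{-1}^j(\xi_{ij})+F_0^j(\xi_{ij})+F_1^j(\xi_{ij})\bigr)=1$, and additivity comes from splitting the defining sum, with countable additivity reduced to finite additivity because $\mathcal{V}(m,n)$ is finite; your justification of the factorisation by the distributive law is cleaner than the paper's appeal to independence, since exchanging the sum and the product is purely algebraic.
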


\begin{proof}
	We should verify the two defining properties of a probability measure:
	\begin{enumerate}
		\item Non-negativity and normalization: $0 \leq \mu_{\Xi}(A) \leq 1$ for every $A \subseteq \mathcal{V}(m,n)$, and $\mu_{\Xi}(\mathcal{V}(m,n)) = 1$.
		\item Countable additivity (which reduces to finite additivity because $\mathcal{V}(m,n)$ is finite): for any pairwise disjoint sets $A_1, A_2, \dots \subseteq \mathcal{V}(m,n)$
		\[
		\mu_{\Xi}\!\left(\bigcup_{k=1}^{\infty} A_k\right) = \sum_{k=1}^{\infty} \mu_{\Xi}(A_k).
		\]
	\end{enumerate}
	
	\noindent\textbf{Non-negativity.} By Definition~\ref{def:def01}, the functions $F_{-1}^j$, $F_{0}^j$, $F_{1}^j$, $j=\overline{1,m}$, take values in $[0,1]$. Hence each factor in the product defining $\mu_{\Xi}(\{\langle v\rangle\})$ is non‑negative, and consequently $\mu_{\Xi}(\{\langle v\rangle\}) \geq 0$ for every $\langle v\rangle \in \mathcal{V}(m,n)$. For any subset $A \subseteq \mathcal{V}(m,n)$, the measure $\mu_{\Xi}(A)$ is a sum of non‑negative numbers, so $\mu_{\Xi}(A) \geq 0$.
	
	\noindent\textbf{Normalization.} We show that the total measure of the whole space equals one:
	\[
	\mu_{\Xi}(\mathcal{V}(m,n)) = \sum_{\langle v\rangle \in \mathcal{V}(m,n)} \mu_{\Xi}(\{\langle v\rangle\}) = 1.
	\]
	
	The set $\mathcal{V}(m,n)$ corresponds bijectively to all choices of $v_{ij} \in \mathcal{C}=\{\overline{x}_j, \varepsilon, x_j\}$ for each pair $(i,j)$ with $i=\overline{1,n}$, $j=\overline{1,m}$. Under the assumption that the choices at different positions $(i,j)$ are independent, we can exchange the product and the sums:
	\begin{align*}
		\sum_{\langle v\rangle \in \mathcal{V}(m,n)} \mu_{\Xi}(\{\langle v\rangle\})
		&= \sum_{v_{11}\in \mathcal{C}} \cdots \sum_{v_{nm}\in \mathcal{C}} \prod_{i=1}^n \prod_{j=1}^m
		\begin{cases}
			F_{-1}^j(\xi_{ij}) &  \text{if } v_{ij} = \overline{x}_j,\\
			F_{0}^j(\xi_{ij}) &  \text{if } v_{ij} = \varepsilon,\\
			F_{1}^j(\xi_{ij}) &  \text{if } v_{ij} = x_j,
		\end{cases}= \\
		&= \prod_{i=1}^n \prod_{j=1}^m
		\Bigl( F_{-1}^j(\xi_{ij}) + F_{0}^j(\xi_{ij}) + F_{1}^j(\xi_{ij}) \Bigr).
	\end{align*}
	By Definition~\ref{def:def01}, for every $j=\overline{1,m}$ and every $x\in \mathbb{R}$ we have
	\[
	F_{-1}^j(x) + F_{0}^j(x) + F_{1}^j(x) = 1.
	\]
	Applying this identity with $x = \xi_{ij}$ gives $F_{-1}^j(\xi_{ij}) + F_{0}^j(\xi_{ij}) + F_{1}^j(\xi_{ij}) = 1$ for each $(i,j)$. Therefore
	\[
	\sum_{\langle v\rangle \in \mathcal{V}(m,n)} \mu_{\Xi}(\{\langle v\rangle\}) = \prod_{i=1}^n \prod_{j=1}^m 1 = 1.
	\]
	
	\noindent\textbf{Additivity.} Since $\mathcal{V}(m,n)$ is finite, it suffices to prove finite additivity. Let $A_1, A_2, \dots, A_K \subseteq \mathcal{V}(m,n)$ be pairwise disjoint. Then
	\[
	\mu_{\Xi}\!\left(\bigcup_{k=1}^K A_k\right)
	= \sum_{\langle v\rangle \in \bigcup_{k=1}^K A_k} \mu_{\Xi}(\{\langle v\rangle\})
	= \sum_{k=1}^K \sum_{\langle v\rangle \in A_k} \mu_{\Xi}(\{\langle v\rangle\})
	= \sum_{k=1}^K \mu_{\Xi}(A_k).
	\]
	
	Both conditions are satisfied, hence $\mu_{\Xi}$ is a probability measure.
\end{proof}

\section{PDNFs and Bayesian Fusions\label{app:fusions}}
\begin{thm}\label{thm:characterization}
	Let $F_\ell: \mathbb{R} \to (0,1)$, $\ell \in \{-1,0,1\}$ be continuous functions such that $\sum_{\ell \in \{-1,0,1\}} F_\ell(\xi)=1$ for all $\xi \in \mathbb{R}$. Suppose that for all $\xi, \eta \in \mathbb{R}$ the following property holds:
	\[
	F_\ell(\xi+\eta) = \frac{F_\ell(\xi) F_\ell(\eta)}{\sum_{\ell' \in \{-1,0,1\}} F_{\ell'}(\xi) F_{\ell'}(\eta)}, \qquad \ell \in \{-1,0,1\}.
	\]
	Then there exist constants $\alpha_\ell \in \mathbb{R}$ such that
	\[
	F_\ell(\xi) = \frac{\exp(\alpha_\ell \xi)}{\sum_{\ell' \in \{-1,0,1\}} \exp(\alpha_{\ell'} \xi)}.
	\]
\end{thm}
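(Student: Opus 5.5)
The plan is to pass to log-ratio coordinates, in which the fusion rule becomes Cauchy's additive functional equation. Since every $F_\ell$ takes values in $(0,1)$, the functions
\[
g_\ell(\xi) = \ln\frac{F_\ell(\xi)}{F_0(\xi)}, \qquad \ell \in \{-1,1\},
\]
are well defined and continuous on $\mathbb{R}$. Dividing the hypothesis for index $\ell$ by the hypothesis for index $0$ removes the common normalising denominator, which gives
\[
\frac{F_\ell(\xi+\eta)}{F_0(\xi+\eta)} = \frac{F_\ell(\xi)}{F_0(\xi)}\cdot\frac{F_\ell(\eta)}{F_0(\eta)}.
\]
Taking logarithms then yields $g_\ell(\xi+\eta)=g_\ell(\xi)+g_\ell(\eta)$ for all $\xi,\eta\in\mathbb{R}$.

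Because $g_\ell$ is continuous and additive, the standard Cauchy argument applies. Additivity forces $g_\ell(q\xi)=q\,g_\ell(\xi)$ for rational $q$, and continuity extends this to real $q$. Hence $g_\ell(\xi)=c_\ell\xi$ with $c_\ell=g_\ell(1)$.

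Next I set $\alpha_0=0$ and $\alpha_\ell=c_\ell$ for $\ell=\pm1$, so that $F_\ell(\xi)=F_0(\xi)\exp(\alpha_\ell\xi)$ holds for all three indices. Summing over $\ell$ and using $\sum_\ell F_\ell=1$ gives
\[
F_0(\xi)=\Bigl(\sum_{\ell'}\exp(\alpha_{\ell'}\xi)\Bigr)^{-1}.
\]
Substituting back produces exactly the claimed softmax form. The constants are unique only up to a common additive shift, which does not affect the statement.

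There is no genuinely hard step. The only point needing care is the regularity used to solve Cauchy's equation, and continuity of $F_\ell$, which is assumed, is more than enough; measurability or local boundedness would also suffice. Positivity, given by the codomain $(0,1)$, is what makes the logarithms legitimate. A consistency check comes for free: setting $\eta=0$ in the hypothesis forces $F(0)$ to be uniform, $F_\ell(0)=1/3$, which agrees with $g_\ell(0)=0$.
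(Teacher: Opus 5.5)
Your proof is correct and is essentially the paper's argument. The paper takes $g_\ell=\ln F_\ell$, notes that the normalising term is independent of $\ell$, and shows that $\phi_{\ell,0}=g_\ell-g_0$ satisfies Cauchy's equation; this is exactly your $\ln(F_\ell/F_0)$, which you obtain by dividing out the denominator directly. It then solves that equation by continuity and recovers $F_0$ from $\sum_\ell F_\ell=1$, just as you do.
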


\begin{proof}
	Define $g_\ell(\xi) = \ln F_\ell(\xi)$. The hypothesis becomes
	\[
	\exp\bigl(g_\ell(\xi+\eta)\bigr) = \frac{\exp\bigl(g_\ell(\xi)+g_\ell(\eta)\bigr)}{\sum_{\ell'\in \{-1,0,1\}} \exp\bigl(g_{\ell'}(\xi)+g_{\ell'}(\eta)\bigr)}.
	\]
	Taking logarithms yields
	\[
	g_\ell(\xi+\eta) = g_\ell(\xi) + g_\ell(\eta) - \ln\!\left(\sum_{\ell'\in \{-1,0,1\}} \exp\bigl(g_{\ell'}(\xi)+g_{\ell'}(\eta)\bigr)\right).
	\]
	The last term is independent of $\ell$; denote it by $h(\xi,\eta)$. Then for any $\ell$,
	\[
	g_\ell(\xi+\eta) - g_\ell(\xi) - g_\ell(\eta) = h(\xi,\eta).
	\]
	The left-hand side depends on $\ell$, while the right-hand side does not. Hence for any two indices $\ell, \ell'$,
	\[
	g_\ell(\xi+\eta) - g_\ell(\xi) - g_\ell(\eta) = g_{\ell'}(\xi+\eta) - g_{\ell'}(\xi) - g_{\ell'}(\eta).
	\]
	Rearranging,
	\[
	\bigl(g_\ell(\xi+\eta)-g_{\ell'}(\xi+\eta)\bigr) - \bigl(g_\ell(\xi)-g_{\ell'}(\xi)\bigr) - \bigl(g_\ell(\eta)-g_{\ell'}(\eta)\bigr) = 0.
	\]
	Define $\phi_{\ell,\ell'}(\xi) = g_\ell(\xi) - g_{\ell'}(\xi)$. Then the above equation reads
	\[
	\phi_{\ell,\ell'}(\xi+\eta) = \phi_{\ell,\ell'}(\xi) + \phi_{\ell,\ell'}(\eta).
	\]
	Thus $\phi_{\ell,\ell'}$ satisfies Cauchy's functional equation. By measurability of $F_\ell$, the functions $g_\ell$ are continuous, hence $\phi_{\ell,\ell'}$ is continuous. The only continuous solutions to Cauchy's equation are linear: $\phi_{\ell,\ell'}(\xi) = c_{\ell,\ell'} \xi$ for some constant $c_{\ell,\ell'} \in \mathbb{R}$. Moreover, $\phi_{\ell,\ell'} = -\phi_{\ell',\ell}$, so $c_{\ell,\ell'} = -c_{\ell',\ell}$.
	
	Now fix a reference index, say $0$. For each $\ell$, set $\alpha_\ell = c_{\ell,0}$. Then $\phi_{\ell,0}(\xi) = \alpha_\ell \xi$, i.e.,
	\[
	g_\ell(\xi) - g_0(\xi) = \alpha_\ell \xi,
	\]
	or equivalently,
	\[
	g_\ell(\xi) = g_0(\xi) + \alpha_\ell \xi.
	\]
	
	Using the normalization condition $\sum_{\ell \in \{-1,0,1\}} \exp(g_\ell(\xi)) = 1$, we obtain
	\[
	\sum_{\ell \in \{-1,0,1\}} \exp\bigl(g_0(\xi) + \alpha_\ell \xi\bigr) = \exp(g_0(\xi)) \sum_{\ell} \exp(\alpha_\ell \xi) = 1.
	\]
	Hence $\exp(g_0(\xi)) = \bigl(\sum_{\ell} \exp(\alpha_\ell \xi)\bigr)^{-1}$, and therefore
	\[
	F_\ell(\xi) = \exp(g_\ell(\xi)) = \frac{\exp(\alpha_\ell \xi)}{\sum_{\ell'} \exp(\alpha_{\ell'} \xi)}.
	\]
	This completes the proof.
\end{proof}
The theorem shows that the exponential (softmax) form is the only one for which the addition of weights in the PDNF space corresponds exactly to Bayesian fusion of independent evidence. In practice, other families $\mathbf{F}$ that satisfy the conditions of Definition~\ref{def:def01} may still be used, but the additive combination then approximates Bayesian fusion rather than implementing it exactly. The additive structure still captures the idea that larger weights $\xi_{ij}$ correspond to stronger, more definite predictions.

\section{Discussion on the Probability to Generate a Venjunction Close to the Given One \label{app:close}}
Let us estimate the probability that a randomly generated venjunction $\langle v\rangle$ (according to $\mu_\Xi$) is close to a fixed target venjunction $\langle v'\rangle$ of the same length under a suitable metric. A natural metric on $\mathcal{V}(m,n)$ is a weighted Hamming distance.For two literals $a,b \in \{\overline{x}_j,\varepsilon,x_j\}$ (the index $j$ is the same because positions are aligned), define

\[
w(a,b) =
\begin{cases}
	0, & a = b,\\[2pt]
	1, & (a = \varepsilon,\; b\in\{x_j,\overline{x}_j\}) \text{ or } (b = \varepsilon,\; a\in\{x_j,\overline{x}_j\}),\\[2pt]
	2, & \{a,b\} = \{x_j,\overline{x}_j\}.
\end{cases}
\]

For two venjunctions $\langle v\rangle = v_{n1}\wedge\dots\wedge v_{nm}\angle \dots \angle v_{11}\wedge\dots\wedge v_{1m}$ and $\langle v'\rangle = v'_{n1}\wedge\dots\wedge v'_{nm}\angle \dots \angle v'_{11}\wedge\dots\wedge v'_{1m}$, we set

\[
d(v,v') = \sum_{i=1}^n \sum_{j=1}^m w\bigl(v_{ij},v'_{ij}\bigr).
\]

It is easy to verify that $d$ is a metric (non‑negative, symmetric, satisfies the triangle inequality) and takes integer values from $0$ to $2nm$.

Fix a target venjunction $\langle v'\rangle$ and a weight matrix $\Xi$. Because the realisations at different positions $(i,j)$ are independent under $\mu_\Xi$, the random variables

\[
\zeta_{ij} := w\bigl(v'_{ij},v_{ij}\bigr)
\]

are independent. Their distribution depends on $v_{ij}$ and on the probabilities $F_\ell^j(\xi_{ij})$. Denote

\[
p_{ij}^{(0)} = \mathbb{P}_{\Xi}\{\zeta_{ij}=0\},\quad 
p_{ij}^{(1)} = \mathbb{P}_{\Xi}\{\zeta_{ij}=1\},\quad 
p_{ij}^{(2)} = \mathbb{P}_{\Xi}\{\zeta_{ij}=2\}.
\]

These probabilities are easily expressed through $F_{-1}^j,F_0^j,F_1^j$ (see Table~\ref{tab:weights}).

\begin{table}[h]
	\centering
	\begin{tabular}{|c|c|c|c|}
		\hline
		$v_{ij}$ & $p_{ij}^{(0)}$ & $p_{ij}^{(1)}$ & $p_{ij}^{(2)}$ \\ \hline
		$x_j$      & $F_{1}^j$      & $F_{0}^j$       & $F_{-1}^j$      \\ \hline
		$\overline{x}_j$ & $F_{-1}^j$ & $F_{0}^j$ & $F_{1}^j$ \\ \hline
		$\varepsilon$    & $F_{0}^j$   & $F_{1}^j+F_{-1}^j$ & $0$ \\ \hline
	\end{tabular}
	\caption{Probabilities of the contribution $\zeta_{ij}$ depending on the target literal.}
	\label{tab:weights}
\end{table}

Consequently, the total distance

\[
S := d(v,v') = \sum_{i=1}^n\sum_{j=1}^m \zeta_{ij}
\]

is a sum of independent (but not identically distributed) random variables taking values in $\{0,1,2\}$. 

\begin{thm}
	The probability that $\langle v'\rangle$ lies in the ball of radius $\varrho$ (where $\varrho$ is a non‑negative integer) 
	\[\mathbb{P}_{\Xi}\{d(v,v') \le\varrho\} = \sum_{k=0}^{\varrho} c_k,\]
	where $c_k$ are coefficients of the generating function
	\[
	G(t) = \prod_{i=1}^n \prod_{j=1}^m \bigl( p_{ij}^{(0)} + p_{ij}^{(1)} t + p_{ij}^{(2)} t^2 \bigr) = \sum_{k=0}^{2nm} c_k t^k.
	\]
\end{thm}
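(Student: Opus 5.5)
The approach is through probability generating functions of independent nonnegative integer-valued random variables. For each position $(i,j)$ the contribution $\zeta_{ij}=w(v'_{ij},v_{ij})$ takes values in $\{0,1,2\}$, with probabilities $p_{ij}^{(0)},p_{ij}^{(1)},p_{ij}^{(2)}$ read off from Table~\ref{tab:weights}. Its generating function is
\[
G_{ij}(t)=\mathbb{E}_\Xi\bigl[t^{\zeta_{ij}}\bigr]=p_{ij}^{(0)}+p_{ij}^{(1)}t+p_{ij}^{(2)}t^2 .
\]
First I will check that $p_{ij}^{(0)}+p_{ij}^{(1)}+p_{ij}^{(2)}=1$ in each row of the table. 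This follows from $F_{-1}^j+F_0^j+F_1^j=1$ (Definition~\ref{def:def01}), so $G_{ij}$ is a genuine probability generating function.

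Second, I use the product structure of $\mu_\Xi$, which is the same factorisation used in the proof of Proposition~\ref{prop:measure}. Under $\mu_\Xi$ the literals $v_{ij}$ are independent across positions, so the $\zeta_{ij}$ are independent as well, each being a function of $v_{ij}$ alone. Hence the generating function of $S=\sum_{i,j}\zeta_{ij}$ factorises:
\[
\mathbb{E}_\Xi[t^S]=\prod_{i=1}^n\prod_{j=1}^m\mathbb{E}_\Xi[t^{\zeta_{ij}}]=G(t).
\]
To avoid appealing to abstract independence, I can verify this directly. Expand $\mathbb{E}_\Xi[t^S]=\sum_{\langle v\rangle}\mu_\Xi(\{\langle v\rangle\})\,t^{d(v,v')}$, write both $\mu_\Xi(\{\langle v\rangle\})$ and $t^{d(v,v')}$ as products over $(i,j)$, and exchange the sum with the product exactly as in the normalisation step of Proposition~\ref{prop:measure}.

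Third, since $S$ takes integer values in $\{0,\dots,2nm\}$, the polynomial $\mathbb{E}_\Xi[t^S]=\sum_k \mathbb{P}_\Xi\{S=k\}\,t^k$ has degree at most $2nm$. Comparing coefficients with $G(t)=\sum_k c_k t^k$ gives $c_k=\mathbb{P}_\Xi\{S=k\}$; this identification is legitimate because two polynomials that agree for all real $t$ have identical coefficients. Summing over $k=0,\dots,\varrho$ and using that $\{S\le\varrho\}$ is the disjoint union of the events $\{S=k\}$ gives the claim. If $\varrho>2nm$, the sum is truncated at $2nm$ and equals $1$.

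The only point requiring care is the second step, namely justifying the factorisation from the product form of $\mu_\Xi$ rather than from an informal independence assumption. I will do this by the explicit sum–product interchange over $\mathcal{C}^{nm}$, as described above. The rest is routine bookkeeping.
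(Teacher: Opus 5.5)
Your proof is correct and follows the same route as the paper: you identify $c_k$ with $\mathbb{P}_\Xi\{S=k\}$ through the probability generating function of $S=\sum_{i,j}\zeta_{ij}$, which factorises over positions because the $\zeta_{ij}$ are independent under $\mu_\Xi$. Your explicit sum--product interchange (as in the normalisation step of Proposition~\ref{prop:measure}) and the coefficient comparison simply make rigorous the factorisation step that the paper asserts without detail.
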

\begin{proof}
	The probability that $\langle v'\rangle$ lies in the ball of radius $\varrho$ is
	\[
	\mathbb{P}_{\Xi}\{d(v,v')\le \varrho\} = \sum_{k=0}^{\varrho} \mathbb{P}_{\Xi}\{S=k\}.
	\]
	Let us define $c_k = \mathbb{P}_{\Xi}\{S=k\}$. The distribution of $S$ can be obtained by means of the probability generating function
	
	\[
	G(t) = \prod_{i=1}^n \prod_{j=1}^m \bigl( p_{ij}^{(0)} + p_{ij}^{(1)} t + p_{ij}^{(2)} t^2 \bigr) = \sum_{k=0}^{2nm} c_k t^k.
	\]
	Hence $\mathbb{P}_{\Xi}\{S\le\varrho\} = \sum_{k=0}^{\varrho} c_k$.
\end{proof}

For moderate $n,m$, the coefficients $c_k$ can be computed by successive convolutions. When $n$ and $m$ are large, a normal approximation may be used:

\[
\mathbb{P}_{\Xi}\{d(v,v')\le \varrho\} \approx \Phi\!\left( \frac{\varrho + 0.5 - \mu_S}{\sigma_S} \right),
\]

where $\Phi$ is the standard normal distribution function and

\[
\mu_S = \sum_{i,j} \mathbb{E}[\zeta_{ij}],\qquad 
\sigma_S^2 = \sum_{i,j} \operatorname{Var}(\zeta_{ij})
\]

with

\[
\mathbb{E}[\zeta_{ij}] = p_{ij}^{(1)} + 2p_{ij}^{(2)},\qquad 
\operatorname{Var}(\zeta_{ij}) = \mathbb{E}[\zeta_{ij}^2] - (\mathbb{E}[\zeta_{ij}])^2,\quad 
\mathbb{E}[\zeta_{ij}^2] = p_{ij}^{(1)} + 4p_{ij}^{(2)}.
\]

This approximation is accurate when $\mu_S$ is not too close to $0$ or $2nm$, and $\sigma_S$ is large enough.
%

\begin{exa}
	Let us illustrate the computation with a simple case.  
	Take $m = 1$ sensor and $n = 2$ time steps, so the set $\mathcal{V}(1,2)$ consists of venjunctions of the form  
	$\langle v\rangle = v_{21}\angle v_{11}$ (recall that the order is reversed, i.e. the rightmost conjunction corresponds to the earliest time).  
	Choose the target venjunction  
	\[
	\langle v'\rangle = x_1 \angle x_1 \quad (\text{so } v'_{11}=x_1,\; v'_{21}=x_1).
	\]
	
	Assume that for both positions the probability maps are identical and given by  
	\[
	F_1(\xi) = 0.6,\quad F_0(\xi)=0.3,\quad F_{-1}(\xi)=0.1
	\]
	(the value of $\xi$ is irrelevant because we keep it fixed).  
	Thus for each position $(i,1)$ we have
	\[
	p^{(0)} = \mathbb{P}\{\zeta_{i1}=0\} = F_1 = 0.6,\quad
	p^{(1)} = \mathbb{P}\{\zeta_{i1}=1\} = F_0 = 0.3,\quad
	p^{(2)} = \mathbb{P}\{\zeta_{i1}=2\} = F_{-1} = 0.1.
	\]
	
	The random variables $\zeta_{11}$ and $\zeta_{21}$ are independent and identically distributed.  
	Their common probability generating function is  
	\[
	G_{\zeta}(t) = 0.6 + 0.3t + 0.1t^2.
	\]
	
	The total distance $S = \zeta_{11}+\zeta_{21}$ has generating function 
	\begin{multline*}
		G_S(t) = 0.36 + 2\cdot0.6\cdot0.3\,t + \bigl(2\cdot0.6\cdot0.1 + 0.3^2\bigr)t^2 + 2\cdot0.3\cdot0.1\,t^3 + 0.1^2\,t^4 =\\
		= 0.36 + 0.36\,t + 0.21\,t^2 + 0.06\,t^3 + 0.01\,t^4.
	\end{multline*}
	Hence, the probabilities of the possible distances are
	
	\[
	\mathbb{P}\{S=0\}=0.36,\;
	\mathbb{P}\{S=1\}=0.36,\;
	\mathbb{P}\{S=2\}=0.21,\;
	\mathbb{P}\{S=3\}=0.06,\;
	\mathbb{P}\{S=4\}=0.01.
	\]
	
	From these, we obtain the cumulative probabilities:
	
	\[
	\begin{array}{c|c}
		\varepsilon & \mathbb{P}\{d(v,v')\le\varepsilon\} \\ \hline
		0 & 0.36 \\
		1 & 0.72 \\
		2 & 0.93 \\
		3 & 0.99 \\
		4 & 1.00
	\end{array}
	\]
	
	Thus, for example, the chance that a randomly drawn venjunction differs from $v'$ by at most $2$ (i.e., agrees in all positions or has a single $\varepsilon$ or two $\varepsilon$'s, etc.) is $93\%$.
	
	\smallskip
	\noindent\textbf{Normal approximation.}  
	Here $\mu_S = 2(0.3+2\cdot0.1)=2(0.5)=1.0$ and  
	$\sigma_S^2 = 2\bigl[(0.3+4\cdot0.1) - 0.5^2\bigr] = 2(0.7-0.25)=2\cdot0.45=0.9$, so $\sigma_S\approx0.949$.  
	For $\varepsilon=2$ the approximation gives
	\[
	\Phi\!\left(\frac{2+0.5-1.0}{0.949}\right)=\Phi(1.58)\approx0.943,
	\]
	which is reasonably close to the exact value $0.93$.
\end{exa}

\section{Control Systems Examples\label{app:samples}}
\begin{exa}
	A simple control system with uncertainty.
\end{exa}
Consider a system of two sensors: atmospheric pressure $p$ and air temperature $\theta$. The pressure sensor is triggered at time $t$ when the pressure $p(t) > p_0$, and the temperature sensor is triggered at time $t$ when the temperature $\theta(t) > \theta_0$. Furthermore, the sensors can erroneously trigger at time $t$ at high air humidity $\varphi(t)>\varphi_0$. We must perform some actions if both $p(t) > p_0$ and $\theta(t) > \theta_0$ for a sufficiently long time.

If the pressure sensor is triggered at time $t$, we will write $x(t)$. If it is not triggered at time $t$, we will write $\overline{x}(t)$. If the temperature sensor is triggered at time $t$, we will write $y(t)$. If it is not triggered at time $t$, we will write $\overline{y}(t)$. Finally, if $\varphi(t)\leq \varphi_0$, $t=\overline{1,T}$, we should perform actions when 
\[Z = x(1)\wedge y(1)\vee \dots \vee x(T)\wedge y(T),\quad T>T_0>1.\]

As we actually don't know the values of $\varphi(t)$, we can't be sure that $Z$ corresponds to the real situation and can be used for decision-making. So, we can instead write

\[Z(\varphi) = x(1)^{\varphi(1)}\wedge y(1)^{\varphi(1)}\vee \dots \vee x(T)^{\varphi(T)}\wedge y(T)^{\varphi(T)}.\]

Here $x(t)^{\varphi(t)} = x(t)$, $y(t)^{\varphi(t)} = y(t)$, $\varphi(t)\leq \varphi_0$. If $\varphi(t) > \varphi_0$, $x(t)^{\varphi(t)}$ can take the values $x(t)$, $\overline{x}(t)$ and  $y(t)^{\varphi(t)}$ can take the values $y(t)$, $\overline{y}(t)$.

We want that if $p(t) > p_0 \wedge \varphi(t)>\varphi_0$, then $x(t)^{\varphi(t)} = x(t)$, but if $p(t) \leq p_0 \wedge \varphi(t)>\varphi_0$ then $x(t)^{\varphi(t)} = \overline{x}(t)$. Unfortunately, we have no humidity sensor and can't make the aforementioned inferences. Instead, we know some facts about humidity and sensors, namely:

\begin{itemize}[label=$\triangleright$]
	\item Possible values of temperature, pressure, and humidity are inside open line intervals $(\theta_{\min}, \theta_{\max})$, $(p_{\min}, p_{\max})$, $(0,1)$ correspondingly, any parameter value is equiprobable.
	\item The temperature sensor sometimes triggers falsely at $\theta_{x}\leq \theta(t)<\theta_0$, $\varphi(t) >\varphi_0$, $\theta_{x}$ is the false triggering threshold.
	\item The pressure sensor sometimes triggers falsely at $p_y\leq p(t)< p_0$, $\varphi(t) >\varphi_0$, $p_{y}$ is the false triggering threshold.
\end{itemize}

Therefore, there are three possible outputs for the temperature sensor $x^{\varphi(t)}$: the correct triggering $x(t)$, the dubious triggering $\varepsilon(t)$, and the correct not triggering $\overline{x}(t)$. 

We can roughly estimate the probabilities of these outputs as follows:
\begin{multline}
	\label{eq:temp}
	\mathbb{P}\{x^{\varphi(t)}\sim x(t)\}=\frac{\theta_{\max}-\theta_0}{\theta_{\max}-\theta_{\min}},\quad \mathbb{P}\{x^{\varphi(t)}\sim \varepsilon(t)\}=\frac{\theta_0-\theta_x}{\theta_{\max}-\theta_{\min}},\\
	\mathbb{P}\{x^{\varphi(t)}\sim \overline{x}(t)\}=\frac{\theta_x-\theta_{\min}}{\theta_{\max}-\theta_{\min}}.
\end{multline}

The pressure sensor $y^{\varphi(t)}$ also has three possible outputs: the correct triggering $y(t)$, the dubious triggering $\varepsilon(t)$, and the correct not triggering $\overline{y}(t)$. We can estimate the probabilities of these outputs as follows:
\begin{multline}
	\label{eq:press}
	\mathbb{P}\{y^{\varphi(t)}\sim y(t)\}=\frac{p_{\max}-p_0}{p_{\max}-p_{\min}},\quad \mathbb{P}\{y^{\varphi(t)}\sim \varepsilon(t)\}=\frac{p_0-p_y}{p_{\max}-p_{\min}},\\
	\mathbb{P}\{y^{\varphi(t)}\sim \overline{y}(t)\}=\frac{p_y-p_{\min}}{p_{\max}-p_{\min}}.
\end{multline}

Therefore, we can consider the output of the sensor system at time $t$ as the PDNF $x^{\theta_0}(t)\wedge y^{p_0}(t)$ with the mapping $\mathbf{F}$ corresponding the equalities (\ref{eq:temp}), (\ref{eq:press}). If temperature and pressure have non-uniform distributions, (\ref{eq:temp}), (\ref{eq:press}) should be corrected accordingly.

\begin{figure}[htbp]
	\centering
	\begin{subfigure}{0.45\textwidth}
		\includegraphics[width=\textwidth]{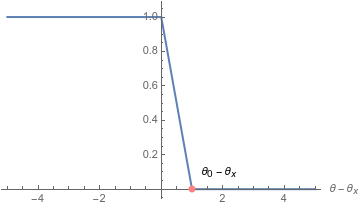}
		\caption{}
	\end{subfigure}
	\begin{subfigure}{0.45\textwidth}
		\includegraphics[width=\textwidth]{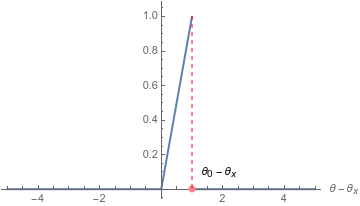}
		\caption{}
	\end{subfigure}
	\begin{subfigure}{0.45\textwidth}
		\includegraphics[width=\textwidth]{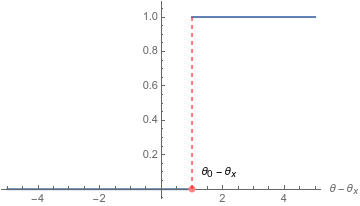}
		\caption{}
	\end{subfigure}
	\caption{The ``temperature'' components $F_{-1}^x$ (a), $F_0^x$ (b), $F_1^x$ (c) of the mapping $\mathbf{F}$}
	\label{fig:fmap0}
\end{figure}

If we can somehow find out the correct temperature and pressure values (for example, through additional sensors without the ability to trigger) it is possible to go further and consider the PDNF $x^{\theta(t)}(t)\wedge y^{p(t)}(t)$. When determining probabilities, it is useful to somehow evaluate the fact that for $\theta(t)$ close to $\theta_x$, $\theta_x<\theta(t)<\theta_0$, the sensor may or may not work correctly. In this case, the mapping $\mathbf{F}$ should be defined as
\begin{gather}\label{eq:FFun01}
	F^x_{-1}(\theta) = \begin{cases}
		1, &\theta<\theta_x,\\
		-(\theta-\theta_x)/(\theta_0-\theta_x)+1, &\theta_x\leq \theta<\theta_0,\\
		0, &\theta \geq \theta_0,
	\end{cases}\\
	F^x_{0}(\theta) = \begin{cases}
		0, &\theta<\theta_x,\\
		(\theta-\theta_x)/(\theta_0-\theta_x), &\theta_x\leq \theta<\theta_0,\\
		0, &\theta \geq \theta_0,
	\end{cases}\\\label{eq:FFun03}
	F^x_{1}(\theta) = \begin{cases}
		0, &\theta<\theta_0,\\
		1, &\theta \geq \theta_0,
	\end{cases}
\end{gather}
for the temperature component $(F_{-1}^x, F_0^x, F_1^x)$ (see Fig. \ref{fig:fmap0}) and, in a similar way, for the pressure component $(F_{-1}^y, F_0^y, F_1^y)$. 

It should be noted that the sequence of the system's states in time moments $t_1,\dots, t_m$ can be written as PDNF $x^{\theta(t_1)}(t_1)\wedge y^{p(t_1)}(t_1)\vee\dots \vee x^{\theta(t_m)}(t_m)\wedge y^{p(t_m)}(t_m)$. In practice, one may use perceived values $\tilde{\theta}(t)$ and $\tilde{p}(t)$ (obtained from the sensors or other sources) instead of inaccessible true values $\theta(t)$ and $p(t)$ to assess the risk of the sensor being in the dubious state via $F_0^x$ and $F_0^y$. 

Assume that the measured values satisfy $\tilde{\theta}(t) \geq \theta_0$ and $\tilde{p}(t) \geq p_0$, but are close to the thresholds in the sense that $|\tilde{\theta}(t) - \theta_0| < \beta_\theta$ and $|\tilde{p}(t) - p_0| < \beta_p$ for small positive constants $\beta_\theta, \beta_p$. This proximity raises the possibility that the true values $\theta(t), p(t)$ actually lie within the ``dubious intervals'' $[\theta_x, \theta_0)$ and $[p_y, p_0)$. To guard against such measurement errors, we introduce safety margins $\lambda_\theta = \sigma_\theta(\theta_0-\theta_x)$, $\lambda_p =\sigma_p(p_0-p_x)$, where $\sigma_\theta, \sigma_p \in (0,1)$ are heuristically chosen parameters. The action is taken only if for all times $t_1,\dots,t_m$ we have 
\[F_1^x\bigl(\tilde{\theta}(t_i) - \lambda_\theta\bigr) =1,\quad F_1^y\bigl(\tilde{p}(t_i)- \lambda_p\bigr) =1\]
and the sensors are triggered. 

Therefore, we should use the mapping $\mathbf{F}_C$, corrected as described above, for decision-making. Thus, instead of a rather cumbersome verbal description of the sensor system and its decision rules, we obtained a mathematical expression reminiscent of Boolean algebra. 

Using the method described in the article \cite{Kuznetsov2025PDNF}, we can later reconstruct the PDNF based on actual observations from the sensor system and compare the resulting PDNF with the theoretically predicted one. A discrepancy between the reconstructed and theoretical PDNF would indicate the existence of an unknown systematic error in the sensors.

\begin{exa}
	The connection between PDNFs and random walks.
\end{exa}
Assume that the temperature $\theta$ and pressure $p$ from the previous example are random walks, for example
\begin{gather*}
	\theta(t) = 20+\frac{1}{4}\mathcal{RW}\left (\frac{1}{2},\frac{1}{3}\right),\\
	p(t) = 100+\frac{1}{2}\mathcal{RW}\left (\frac{1}{2},\frac{1}{3}\right),
\end{gather*}
where $\mathcal{RW}$ is the lattice random walk with the probability of a positive unit step $\frac{1}{2}$, the probability of a negative unit step $\frac{1}{3}$, and the probability of a zero unit step $\frac{1}{6}$. Let the time $t$ be in the set $\mathbf{T}=\{30, \dots, 39\}$, $\theta_x=20$, $\theta_0 = 22$, $p_y = 105$, $p_0=106$. Time is measured in seconds, temperature in Celsius degrees, pressure in kilopascals. We make a very simple numerical simulation of this model in the computer algebra system Wolfram Mathematica.

Introduce the random variable $error$, which takes the value 0 if the sensor triggers erroneously and the value 1 otherwise. The probability of each value is equal to $\frac{1}{2}$. Let us compare average values of functions (\ref{eq:FFun01})--(\ref{eq:FFun03}), calculated for each $t\in \mathbf{T}$ for 10000 experiments and probabilities of the following events
\begin{gather}\label{eq:TFFun01}
	\chi^x_{-1}(t) = \theta(t)<\theta_x \vee \theta(t) \geq \theta_x \wedge \theta(t) < \theta_0\wedge error = 0,\\
	\chi^x_{0}(t)=\theta(t) \geq \theta_x \wedge \theta(t) < \theta_0 \wedge error = 1,\\\label{eq:TFFun03}
	\chi^x_{1}(t)=\theta(t)\geq \theta_0.
\end{gather}
\begin{figure}[H]
	\centering
	\begin{subfigure}{0.45\textwidth}
		\includegraphics[width=\textwidth]{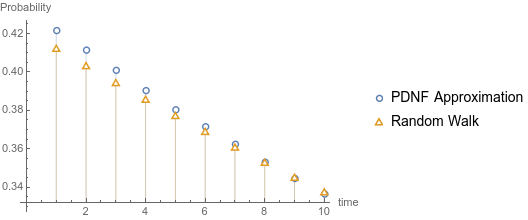}
		\caption{}
	\end{subfigure}
	\begin{subfigure}{0.45\textwidth}
		\includegraphics[width=\textwidth]{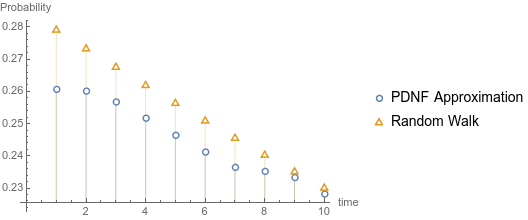}
		\caption{}
	\end{subfigure}
	\begin{subfigure}{0.45\textwidth}
		\includegraphics[width=\textwidth]{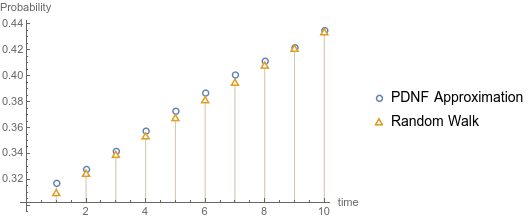}
		\caption{}
	\end{subfigure}
	\caption{Comparison of averaged values of the mapping components $F_{-1}^x$ (A), $F_0^x$ (B), $F_1^x$ (C) (circles) and the corresponding values $\chi^x_{-1}(t)$, $\chi^x_{0}(t)$, $\chi^x_{1}(t)$ (triangles)}
	\label{fig:tfmap0}
\end{figure}

We obtain (see Fig. \ref{fig:tfmap0}) that averages of (\ref{eq:FFun01})--(\ref{eq:FFun03}) are essentially approximations for probabilities (\ref{eq:TFFun01})--(\ref{eq:TFFun03}).

The similar graphs can be obtained for the ``pressure'' component (see Fig. \ref{fig:tfmap1}), where we consider probabilities of the following events
\begin{gather}\label{eq:TFFun01p}
	\chi^y_{-1}(t) = p(t)<p_y \vee p(t) \geq p_y \wedge p(t) < p_0\wedge error = 0,\\
	\chi^y_{0}(t)=p(t) \geq p_y \wedge p(t) < p_0 \wedge error = 1,\\\label{eq:TFFun03p}
	\chi^y_{1}(t)=p(t)\geq p_0.
\end{gather}

\begin{figure}[H]
	\centering
	\includegraphics[width=0.45\linewidth]{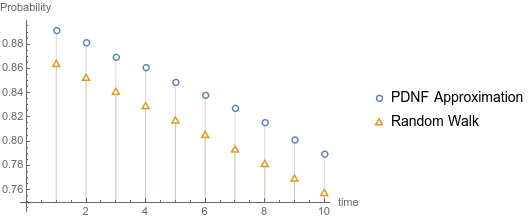}
	\includegraphics[width=0.45\linewidth]{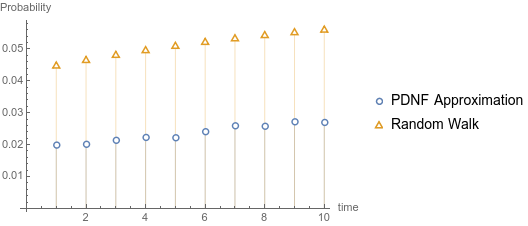}
	\includegraphics[width=0.45\linewidth]{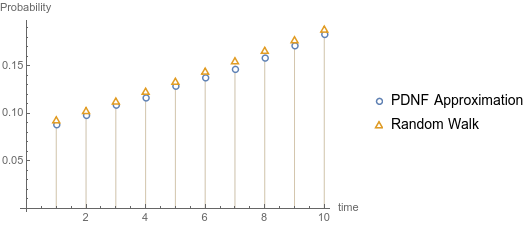}
	\caption{Comparison of averaged values of the mapping component $(F_{-1}^y, F_0^y, F_1^y)$ (circles) and the tuple $(\chi^y_{-1}(t), \chi^y_{0}(t),\chi^y_{1}(t))$ (triangles)}
	\label{fig:tfmap1}
\end{figure}

\end{document}